\journal{Control Engineering Practice}
\def\ps@pprintTitle{%
   \let\@oddhead\@empty
   \let\@evenhead\@empty
   \let\@oddfoot\@empty
   \let\@evenfoot\@oddfoot
}
    \newtheorem{lemma}{Lemma}
    \newtheorem{assumption}{Assumption}
    \newtheorem{remark}{Remark}
    \newtheorem{proposition}{Proposition}
\def\R{\mathbb{R}} % Set of real numbers
\newcommand {\T}{^{\top}} % Transposed of a matrix
\def\vv#1{\mathrm{\mathbf{#1}}} % Bold
\def\LBu{\underline{u}}
\def\UBu{\overline{u}}
\def\LBx{\underline{x}}
\def\UBx{\overline{x}}
\def\I{\mathbb{I}}
\def\({\left(}
\def\){\right)}
\def\diag{\texttt{diag}}
\def\blkdiag{\texttt{blkdiag}}
\def\tx{\tilde x}
\def\tu{\tilde u}
\def\th{\tilde h}
\begin{document}
% Fakesection Title
\begin{frontmatter}

\title{\LARGE Artificial-reference tracking MPC with probabilistically validated performance on industrial embedded systems}

\author[a]{Victor~Gracia\corref{cor1}}%
%\fnref{fn1}
\ead{vgracia@us.es}

\author[b]{Pablo~Krupa} %\fnref{fn2}
\ead{pablo.krupa@imtlucca.it} 

\author[a]{Filiberto~Fele} %\fnref{fn1}
\ead{ffele@us.es}

\author[a]{Teodoro~Alamo} %\fnref{fn1}
\ead{talamo@us.es}

\cortext[cor1]{Corresponding author}
%\fntext[fn1]{\TODO{Thanks for Seville.}}
%\fntext[fn2]{\TODO{Thanks for GSSI.}}

\affiliation[a]{organization={Departamento de Ingeniería de Sistemas y Automática, Universidad de Sevilla},
city={Seville},
country={Spain}}
% addressline={},
% postcode={},

\affiliation[b]{organization={IMT School for Advanced Studies},
city={Lucca},
country={Italy}}

% Fakesection Abstract
\begin{abstract}
    Industrial embedded systems are typically used to execute simple control algorithms due to their low computational resources.
    Despite these limitations, the implementation of advanced control techniques such as Model Predictive Control (MPC) has been explored by the control community in recent years, typically considering simple linear formulations or explicit ones to facilitate the online computation of the control input.
    These simplifications often lack features and properties that are desirable in real-world environments.
    This article presents an efficient implementation for embedded systems of \emph{MPC for tracking with artificial reference}, solved via a recently developed structure-exploiting ADMM-based algorithm. This formulation is tailored to a wide range of applications by incorporating essential practical features at a small computational cost, including integration with an offset-free scheme, back-off parameters that enable constraint tightening, and soft constraints that preserve feasibility under disturbances or plant-model mismatch. This is accompanied with a framework for probabilistic performance validation of the closed-loop system over long-term operation. The applicability of the approach is illustrated on a Programmable Logic Controller (PLC), incorporated in a hardware-in-the-loop setup to control a nonlinear continuous stirred-tank reactor. The behavior of the closed-loop system is probabilistically validated with respect to constraint violations and the number of iterations required at each time step by the MPC optimization algorithm.
\end{abstract}

\begin{keyword}
Model predictive control, soft constraints, ADMM, industrial embedded system,  probabilistic validation.
\end{keyword}

\end{frontmatter}

\thispagestyle{fancy}

\section{Introduction} \label{sec:introduction}

In the process industry, control devices typically prioritize reliability and robustness over computational and memory resources. An example of a widely used industrial embedded system is Programmable Logic Controllers (PLC), which, due to their hardware limitations, have mostly been used to implement simple control laws, such as PID controllers or ladder logic~\citep{alphonsus2016review}. 

Compared to these control strategies, advanced control techniques offer advantages that can be interesting in industrial scenarios. In particular, Model Predictive Control (MPC) has received much attention due to its ability to optimize plant operation while satisfying system constraints, such as actuator limits and safety requirements~\citep{Camacho_S_2013}. 
In MPC, the control input is obtained at each sample time by solving an optimization problem that, based on a prediction model of the system, determines a system trajectory satisfying some optimality criterion.
MPC has strong theoretical benefits, see~\cite{rawlings_model_2017}, and has shown very good results on real systems, e.g.,~\cite{ferramosca_insulin, Nubert_RAL_2020, Krupa_ECC_21, Moscato_CEP_2024}.
This performance comes at the expense of solving the MPC optimization problem at each time step.
Recent developments in efficient optimization methods, some specifically tailored to MPC problems~\citep{Frison_IFAC_2020, Lowenstein_LCSS_24, Spcies}, have resulted in solution times in the order of milliseconds when implemented on computers or microcontrollers.

Significant effort has been made towards the implementation of MPC in industrial systems, e.g., \cite{Valencia_CEP_2011}, where a suboptimal explicit MPC controller for PLCs is proposed; \cite{Goulart_embedded}, where custom embedded hardware architectures are developed for MPC deployment; or \cite{Krupa_TCST_21}, where MPC is tailored to PLC implementations accounting for relevant practical details.
A common feature in the literature is to consider simple MPC formulations with linear prediction model, quadratic objective function, and box constraints.
This choice is motivated by the fact that the resulting optimization problem of the MPC formulation is a simple instance of a Quadratic Programming (QP) problem. QP problems have been widely studied and can be efficiently solved, see~\cite{stellato_osqp_2020, ODonoghue_SCS_21}.
Among the different optimization methods used to solve QP problems, first-order methods~\citep{beck_first-order_2017} stand out as a particularly simple-to-implement class of methods that only require first-order information of the objective function and can leverage the problem sparsity, allowing memory-efficient implementations in embedded systems~\citep{FERREAU201713194}.

While these MPC formulations provide a good starting point for the implementation of MPC on embedded systems, a more suitable basis for the development of a formulation aimed at industrial applications is the family of~\emph{MPC for tracking} (MPCT) formulations with artificial reference~\citep{krupa_tutorial_2024, LIMON20082382}. Under nominal conditions, i.e., when the prediction model perfectly captures the real plant dynamics and no disturbances are present, MPCT formulations provide several benefits over simpler linear MPC controllers, such as an enlarged domain of attraction, or the ability to handle non-reachable references. Nevertheless, they also present some drawbacks that limit their viability in industrial environments. In particular, external disturbances or plant-model mismatch can lead to issues such as loss of feasibility, constraint violations, and steady-state offset when tracking constant references. Another feature often overlooked in the literature is the handling of joint state-input constraints, which can be desirable in many applications.

As a contribution, this article proposes the use of an MPCT formulation tailored to industrial embedded systems through an efficient implementation, augmented with features to address the aforementioned issues. The formulation considers joint state-input constraints that are enforced as soft constraints~\citep{Kerrigan2000SoftCA} to ensure recursive feasibility of the optimization problem. The soft constraints are formulated using an exact penalty approach specifically designed to preserve the computational efficiency of the solver. The MPC controller incorporates back-off parameters~\citep{constraint_tightening} for constraint tightening to improve constraint satisfaction of the closed-loop system.
For an efficient implementation, this work builds on the recently developed first-order MPC solver from~\cite{Gracia_MPCT_solver_24} and~\cite{Gracia_LCSS_24}, which was shown to provide low memory footprints and fast computation times. To allow piecewise-constant reference tracking without offset, the approach from \cite{MAEDER} is adopted, including a state-disturbance estimator that adjusts the MPC reference to compensate for plant-model mismatch and constant additive disturbances.

Although this formulation is suitable for a wide range of real applications, many of its features involve parameters that need careful adjustment to achieve adequate performance in real scenarios, where operating conditions are often not ideal. Consequently, parameter selection plays a critical role in ensuring the suitable and safe operation of the plant under real conditions. To address this challenge, this work proposes a framework that facilitates the configuration of the controller while providing probabilistic guarantees of performance. In particular, the probabilistic performance validation method in~\cite{karg_prob_validation} is extended to allow the validation of several aspects of the control setup. Furthermore, it is shown how the method can be applied for the validation of the long-term closed-loop system operation.

To illustrate the benefits of the proposed MPC scheme and the probabilistic validation approach, the controller is implemented on a Siemens S7-1500 PLC in a Hardware-in-the-Loop (HiL) setup to control a Continuous Stirred-Tank Reactor (CSTR), showing that advanced, practical MPC formulations can be deployed in industrial embedded systems with solution times in the order of a few seconds.
The solver is coded using standard IEC 61131-3 programming languages and adapted to PLC requirements, see~\citet[\S V]{Krupa_TCST_21}.
Our results allow to probabilistically certify that the closed-loop system does not violate the constraints beyond a given threshold, while also providing a probabilistic upper bound on the maximum number of iterations of the optimization solver. The latter ensures suitable computational times to obtain the control input, an essential factor in real-time control applications.

The remainder of this article is structured as follows. Section~\ref{sec:problem} presents the problem setting and the nominal MPCT formulation.
Section~\ref{sec:practical_formulation} details how the formulation is augmented for its real-world implementation, including the offset-free scheme, back-off parameters, and soft constraints.
Section~\ref{sec:prob_validation} illustrates the formal framework for the probabilistic validation of the control setup.
Section~\ref{sec:implementation} explains the structure-exploiting MPCT solver, suitable for embedded systems implementations.
Section~\ref{sec:HiL} presents the experimental results.
Section~\ref{sec:conclusions} closes the article.
% Fakesection notation

\vspace{0.5em}
\noindent{\textbf{Notation:}
The identity matrix of size~$n$~is denoted as $\mathbf{I}_{n}$.
A $p$-dimensional column vector with all its elements equal to~$1$ is denoted as $\vv{1}_p$.
The $j$-th element of a vector $v\in\R^n$ is written as $v_{(j)}$.
The set of integers is denoted as~$\mathbb{I}$, with $\mathbb{I}_{i}^{j} = \{i, i+1, \dots, j\}$ for integers $i<j$. 
For a positive definite matrix $M$, $\| v\|_M \doteq \sqrt{v\T M v}$.
The function $\text{max}(\cdot)$ returns the maximum among the scalars received as input.
Given a vector $v \in\R^n$, $\diag(v)$ is a diagonal matrix with diagonal $v$. 
The block-diagonal matrix formed by placing the matrices $M_1, \dots, M_N$ along its diagonal is denoted by \blkdiag$(M_1,\dots,M_N)$. The column vector formed by the concatenation of column vectors $v_1,\dots,v_n$, possibly of different dimensions, is expressed as $(v_1,\dots,v_n)$. Symbols $\leq (<)$ and $\geq(>)$ denote component-wise inequalities when applied to vectors.
The indicator function on a set $\mathcal{C} \subseteq \R^{n}$ is denoted by $\delta_\mathcal{C} \colon \mathcal{C} \to \{0, +\infty\}$, i.e., $\delta_\mathcal{C}(x) = 0$ if $x \in \mathcal{C}$ and $\delta_\mathcal{C}(x) = +\infty$ if $x \not\in \mathcal{C}$.
Given a random vector $w\sim\mathcal{W}$, the notation $P_\mathcal{W}\{\cdot\}$ denotes the probability of an event involving $w$.

\section{Problem statement and MPC formulation} \label{sec:problem}

Consider the problem of controlling a possibly nonlinear time-invariant system, modeled by the discrete-time equations
\begin{subequations} \label{eq:non_lin_sys}
    \begin{align}
        x(k+1) &= f_p(x(k),u(k),d(k)), \label{eq:non_lin_x}\\
        y(k) &= g_p(x(k),d(k)), \label{eq:non_lin_y_meas}
    \end{align}
\end{subequations}
where $x(k)\in\R^{n}$, $u(k)\in\R^{m}$, $y(k) \in \R^{p}$ and $d(k)\in\R^{n_d}$ are the state, input, measured output and disturbance of the plant at sample time $k$, respectively. The state~$x(k)$ is not required to be measurable.
In the subsequent developments, $f_p$ and $g_p$ are only required to be continuous at all equilibrium points. Moreover, without loss of generality, it is assumed that $g_p(0,0)=0$, and that the origin is an equilibrium point, i.e., $f_p(0,0,0)=0$.

The control objective is to make~$y(k)$ track a piecewise constant reference~$y_r(k)\in\R^p$ while satisfying the system constraints
\begin{equation}\label{eq:non_lin_sys_constraints}
    (x(k),u(k)) \in \mathcal{Z}, \, \forall k\geq0,
\end{equation}
where $\mathcal{Z}\subseteq \R^{n+m}$ is a non-empty, closed and convex set that contains the origin in its interior.

This objective can be achieved by using a tracking MPC controller~\citep{rawlings_model_2017}.
As the controller is to be implemented on a limited hardware platform, solver efficiency is essential. While simple linear MPC formulations can be compatible with such requirement, their practical performance is limited by factors such as modeling errors or unknown disturbances. A contribution of this work is to provide an advanced, efficiently solvable formulation suitable for real control scenarios, achieved by incorporating key practical features.

As a starting point, consider the linear state-space prediction model 
\begin{subequations} \label{eq:lin_sys}
    \begin{align}
        x_{i+1} &= Ax_i+Bu_i, \label{eq:pred_model}\\
        y_i &= Cx_i,
    \end{align}
\end{subequations}
where $x_i\in\R^{n}$, $u_i\in\R^{m}$ and $y_i\in\R^{p}$ are the state, input, and output vectors of the model at prediction step~$i$, respectively.
This model captures the dynamics of~\eqref{eq:non_lin_sys} in a neighborhood of the origin, and is assumed to be controllable and observable. Matrices $A \in \R^{n \times n}$, $B \in \R^{n \times m}$ and $C \in \R^{p \times n}$ can be obtained using standard procedures, e.g., through system identification using experimental data, or by linearization of a high-fidelity nonlinear model of the real system.

\begin{assumption}\label{assumption}
System~\eqref{eq:lin_sys} is such that~$m \geq p$ and matrix~$\begin{bmatrix} A-\mathbf{I}_n & B \\ C & 0 \end{bmatrix}$ has full row rank.
\end{assumption}

The dynamics~\eqref{eq:lin_sys} are subject to the linear constraints%
\begin{subequations}\label{eq:lin_sys_constraints}
    \begin{align}
        &\LBx \leq x_i \leq \UBx,\label{eq:sys_constraints_x}\\ 
        &\LBu \leq u_i \leq \UBu,\label{eq:sys_constraints_u}\\
        &\underline{h} \leq E x_i + F u_i \leq \overline{h},\label{eq:sys_constraints_y}
    \end{align}
\end{subequations}
where $\LBx, \UBx \in \R^n$, $\LBu, \UBu \in \R^m$ and $\underline{h}, \overline{h} \in \R^{n_h}$ satisfy, respectively, $\LBx< \UBx$, $\LBu < \UBu$ and $\underline{h} < \overline{h}$, and $E\in\R^{n_h \times n}$ and $F\in\R^{n_h \times m}$ are the matrices of the coupled state-input constraints~\eqref{eq:sys_constraints_y}. Equations~\eqref{eq:lin_sys_constraints} characterize a non-empty convex set resulting from a linearization or approximation of the real system constraints~\eqref{eq:non_lin_sys_constraints}.

This work is based on the \emph{MPC for Tracking} (MPCT) with artificial reference formulation~\citep{LIMON20082382}. MPCT is a class of MPC formulations characterized by the addition of an artificial reference as part of the decision variables of its optimization problem. In the nominal case, i.e., when~\eqref{eq:lin_sys} matches~\eqref{eq:non_lin_sys} and there are no disturbances in~\eqref{eq:non_lin_sys}, this feature provides several benefits with respect to traditional MPC formulations, at the expense of some additional complexity in its underlying optimization problem.
These benefits include: increased domain of attraction, recursive feasibility even if the reference is changed online, and asymptotic stability to the admissible steady state that is ``closest'' to the given reference.

At sample time~$k$, the MPCT control law, for a given state-input reference pair~$(x_r(k),u_r(k)) \in \R^{n} \times \R^{m}$ and system state estimate~$\hat{x}(k) \in \R^n$, is derived from the solution of the optimization problem
\begin{subequations}\label{eq:MPCT}
	\begin{align}
		\min_{\substack{\vv{x,u}, \\x_s,u_s}} \; & V_o(x_s,u_s) + \sum_{i=0}^{N-1} \ell(x_i, u_i)  \label{eq:MPCT:cost} \\ 
		\textrm{s.t.} \; & x_{0} = \hat{x}(k),  \label{eq:MPCT:initial_constraint}\\
		& \eqref{eq:pred_model}, \ i \in \I_{0}^{N-2}, \label{eq:MPCT:model}\\
        & x_{s} = Ax_{N-1} + Bu_{N-1} \label{eq:MPCT:terminal},\\
        & x_{s} = Ax_{s} + Bu_{s}, \label{eq:MPCT:eq_point_constraint}\\
        & \eqref{eq:sys_constraints_x}, \ i \in \I_1^{N-1}, \label{eq:MPCT:ineq_x}\\
        & \eqref{eq:sys_constraints_u}, \ i \in \I_0^{N-1}, \label{eq:MPCT:ineq_u}\\ 
        & \eqref{eq:sys_constraints_y}, \ i \in \I_0^{N-1}, \label{eq:MPCT:ineq_y}\\
		& \LBx \leq x_{s} \leq \UBx, \ \LBu \leq u_{s} \leq \UBu, \label{eq:MPCT:ineq_xs_us} \\
        & \underline{h} \leq E x_s + F u_s \leq \overline{h}, \label{eq:MPCT:ineq_ys}
	\end{align}
\end{subequations}
where the stage cost function $\ell(\cdot)$ is defined as
\begin{equation*}
    \ell(x, u) \doteq \| x - x_s \|_Q^2 + \| u - u_s \|_R^2,
\end{equation*}
the offset cost function $V_o(\cdot)$ is given by
\begin{equation*}
    V_o(x_s, u_s) \doteq \|x_{s} - x_{r}(k)\|_{T}^2 +  \|u_{s} - u_{r}(k)\|_{S}^2,
\end{equation*}
and the weight matrices $Q \in \R^{n \times n}$, $R \in \R^{m \times m}$, $T \in \R^{n \times n}$ and $S \in \R^{m \times m}$ are positive definite.
The decision variables are the predicted states and inputs $\vv{x} = (x_0,\dots,x_{N-1})$, $\vv{u} = (u_0,\dots,u_{N-1})$, along the prediction horizon of length $N > 0$, together with the artificial reference $(x_s,u_s)\in\R^{n+m}$, forced to be an admissible steady state of~\eqref{eq:lin_sys} satisfying~\eqref{eq:lin_sys_constraints} by means of the constraints~\eqref{eq:MPCT:eq_point_constraint}, \eqref{eq:MPCT:ineq_xs_us} and \eqref{eq:MPCT:ineq_ys}\footnote{In the nominal case, the artificial reference constraints~\eqref{eq:MPCT:ineq_xs_us} and~\eqref{eq:MPCT:ineq_ys} require an arbitrarily small tightening for guaranteed stability~\citep{LIMON20082382}. This detail is omitted in formulation~\eqref{eq:MPCT}, as the inclusion of soft constraints in Section~\ref{sec:problem:soft} removes this requirement.}. Note that the terminal predicted state~$x_N$ is forced to be equal to~$x_s$ via the terminal constraint~\eqref{eq:MPCT:terminal}. 

The MPCT formulation~\eqref{eq:MPCT} allows to deal with references $(x_r, u_r)$ that are not admissible, i.e., that do not satisfy the model constraints~\eqref{eq:lin_sys_constraints}, and/or that are not steady states of~\eqref{eq:lin_sys}.
Under nominal conditions, if a constant reference $(x_r, u_r)$ is admissible, the closed-loop system will asymptotically converge to it; otherwise, it will asymptotically converge to the admissible steady state $(x^\circ, u^\circ)$ that minimizes $\| x^\circ - x_r \|_T^2 + \| u^\circ - u_r \|_S^2$, i.e., to the closest admissible steady state, as measured by the matrices~$T$ and~$S$. An in-depth review of the literature on MPCT can be found in the tutorial article~\cite{krupa_tutorial_2024}.

\begin{remark}
While the tracking objective is stated in terms of the output of~\eqref{eq:non_lin_sys}, the formulation~\eqref{eq:MPCT} requires the state-input reference pair $(x_r(k),u_r(k))$. This is clarified in Section~\ref{sec:offset_free}, where an offset-free strategy allowing to obtain $(x_r(k),u_r(k))$ from the output reference~$y_r(k)$ is included.
\end{remark}

\section{Practical MPCT formulation}\label{sec:practical_formulation}

Despite the aforementioned benefits of the formulation~\eqref{eq:MPCT} compared to simpler MPC formulations, its performance may still not be adequate in real-world applications, where the nominal conditions often do not hold. Model mismatch, discrepancies between the model constraints~\eqref{eq:lin_sys_constraints} and the actual system constraints~\eqref{eq:non_lin_sys_constraints}, and external disturbances can lead to steady-state offset when tracking constant references, or even to infeasibility of the MPCT optimization problem, in which case the controller is unable to provide a control input for the system. The following section shows how the MPCT formulation~\eqref{eq:MPCT} can be tailored to real-world control applications by including key practical features, compatibly with industrial embedded hardware capabilities.

To this end, the inclusion of three elements in the MPCT formulation~\eqref{eq:MPCT} are now discussed: an offset-free strategy to cancel the offset derived from plant-model mismatch and disturbances; back-off parameters that tighten the model constraints~\eqref{eq:lin_sys_constraints} to make the controlled system~\eqref{eq:non_lin_sys} satisfy the actual constraints~\eqref{eq:non_lin_sys_constraints}; and the softening of most of the MPC inequality constraints, resulting in a formulation whose optimization problem is always feasible.

\subsection{Disturbance rejection and offset-free MPC} \label{sec:offset_free}

To address disturbance rejection and offset cancelation, the approach described in \cite{MAEDER} is used. Model~\eqref{eq:lin_sys} is augmented as
\begin{subequations} \label{eq:augmented_system}
    \begin{align}
        x_{i+1} &= A x_i + B u_i + B_d d_i,\\
        d_{i+1} &= d_i,\\
        y_i &= C x_i+d_i,
    \end{align}
\end{subequations}
where $B_d \in \R^{n\times p}$, and the disturbance~$d_i\in\R^{p}$ has the same dimension as the measured output.
At each time step $k$, a Luenberger observer is used to update the estimates~$\hat{x}$ and~$\hat{d}$ of~$x$ and~$d$ as
\begin{equation} 
    \begin{aligned}
        \begin{bmatrix}\label{eq:observer}
           \hat{x}(k+1) \\ \hat{d}(k+1)
        \end{bmatrix}
        = &
        \begin{bmatrix}
            A & B_d \\
            0 & \mathbf{I}_p
        \end{bmatrix}
        \begin{bmatrix}
            \hat{x}(k) \\
            \hat{d}(k)
        \end{bmatrix} +
        \begin{bmatrix}
            B \\ 0
        \end{bmatrix} u(k) \\ &+
        \begin{bmatrix}
            L_x \\ L_d
        \end{bmatrix}
         \(C \hat{x}(k) + \hat{d}(k) - y(k)\),
    \end{aligned}
\end{equation}
where $L_x \in \R^{n \times p}$ and $L_d \in \R^{p \times p}$ are the observer gains, which must be designed so as to ensure the stability of the observer~\citep{MAEDER}.

At each sample time~$k$, the current state estimate~$\hat{x}(k)$ is provided to the MPCT controller, and the estimated disturbance~$\hat{d}(k)$ is used to determine the MPCT reference pair $(x_r(k),u_r(k))$ by solving the linear system
\begin{equation} \label{eq:ref_sys}
    \begin{bmatrix}
        A-\mathbf{I}_n & B \\
        C & 0
    \end{bmatrix}
    \begin{bmatrix}
        x_r(k) \\ u_r(k)
    \end{bmatrix} =
    \begin{bmatrix}
        -B_d \hat{d}(k) \\
        y_r(k) - \hat{d}(k)
    \end{bmatrix}.
\end{equation}
Assumption~\ref{assumption} implies that system~\eqref{eq:ref_sys} is solvable for any $y_r(k)$ and~$\hat{d}(k)$. However, the solution may not be unique or may not satisfy the system constraints~\eqref{eq:lin_sys_constraints}. Since the MPCT can handle nonadmissible references $(x_r(k),u_r(k))$, any solution of~\eqref{eq:ref_sys} can directly serve as a reference. When the solution of~\eqref{eq:ref_sys} is not unique, one can select a specific solution, e.g., the one that minimizes the norm of~$u_r(k)$.

\begin{remark}
The results in~\cite{MAEDER} consider a more general augmented model~\eqref{eq:augmented_system} that includes the case in which only a subset of the measured outputs are required to be tracked without  offset. However, for clarity of presentation, this work focuses on the case where all measured outputs are tracked. In any case, the dimension of~${d}$ in~\eqref{eq:augmented_system} can always be selected as~$p$, simplifying the design of the observer~\eqref{eq:observer}.
\end{remark}

\subsection{Tightening of constraints} \label{sec:back-off}

The use of the formulation~\eqref{eq:MPCT} to control the system~\eqref{eq:non_lin_sys} does not guarantee the satisfaction of the constraints~\eqref{eq:non_lin_sys_constraints}. This is due to: the possible discrepancy between the constraints of the MPC model~\eqref{eq:lin_sys_constraints} and those of the real system~\eqref{eq:non_lin_sys_constraints}, the inability of the controller to predict the actual evolution of the plant given the mismatch between the real system~\eqref{eq:non_lin_sys} and the model~\eqref{eq:lin_sys}, the presence of external disturbances, and the inherent estimation error of the observer~\eqref{eq:observer}.
To address these issues, the model constraints~\eqref{eq:lin_sys_constraints} are tightened by means of back-off parameters~\citep{constraint_tightening}, in order to ensure that the plant~\eqref{eq:non_lin_sys} actually satisfies its true constraints~\eqref{eq:non_lin_sys_constraints} during operation.
In particular, the state and coupled state-input constraints of~\eqref{eq:MPCT} can be tightened by introducing nonnegative back-off parameters $\underline{\eta}_{x}, \overline{\eta}_{x} \in \R^{n}$, $\underline{\eta}_{h}, \overline{\eta}_{h} \in \R^{n_h}$. The resulting set of tightened model constraints is
\begin{subequations}\label{eq:tightened_constraints}
    \begin{align}
         &\underline{x} + \underline{\eta}_{x} \leq x_i \leq \overline{x}-\overline{\eta}_{x},\, i\in\mathbb{I}_1^{N-1}, \\
         &\underline{x} + \underline{\eta}_{x} \leq x_s \leq \overline{x}-\overline{\eta}_{x}, \\
         &\underline{h} + \underline{\eta}_{h} \leq E x_i + F u_i \leq \overline{h} - \overline{\eta}_{h},\, i\in\mathbb{I}_0^{N-1},\\
         &\underline{h} + \underline{\eta}_{h} \leq E x_s + F u_s \leq \overline{h} - \overline{\eta}_{h},
    \end{align}
\end{subequations}
where $\underline{x} + \underline{\eta}_{x} < \overline{x}-\overline{\eta}_{x}$ and $\underline{h} + \underline{\eta}_{h} < \overline{h} - \overline{\eta}_{h}$.
Although possible, input constraints tightening is not considered, as they typically represent real actuator limits.

The selection of the back-off parameters $\underline{\eta}_{x}, \overline{\eta}_{x} \in \R^{n}$, $\underline{\eta}_{h}, \overline{\eta}_{h} \in \R^{n_h}$ usually involves a trade-off. Small values might be insufficient to avoid excessive violations of the actual constraints. Conversely, large values might lead to conservative behavior. This might also cause the reference pair~$(x_r(k),u_r(k))$ provided by~\eqref{eq:ref_sys} to lie outside the tightened MPC constraints, resulting in steady-state offset of the closed-loop system. In Section~\ref{sec:prob_validation}, a formal validation approach enabling the selection of the values of the back-off parameters in~\eqref{eq:tightened_constraints} is presented, providing probabilistic bounds on the satisfaction of the constraints~\eqref{eq:non_lin_sys_constraints} during online operation.

\subsection{Soft constraints} \label{sec:problem:soft}

One of the well-known limitations of MPC when considering hard constraints, such as in~\eqref{eq:MPCT}, is the possible infeasibility of the optimization problem for a given~$\hat{x}(k)$.
In this situation, the MPC controller cannot provide a control input, requiring the use of an auxiliary controller.
This situation is expected when using linear MPC to control a nonlinear system, either due to disturbances, plant-model mismatch or estimation errors.
To address this issue, hard inequality constraints are relaxed into \emph{soft constraints}~\citep{Kerrigan2000SoftCA, Wabersich2022980}, whose violation is penalized in the cost function. This ensures that the optimization problem is always feasible, enabling the controller to always compute a control input.

In this article, soft constraints are encoded as proposed in~\cite{Gracia_LCSS_24}. As an example, the constraint~\eqref{eq:sys_constraints_x} can be softened by replacing it with a nonsmooth term $\gamma_\beta(x_i ; \overline{x}, \underline{x})\doteq \sum_{j=1}^{n} \beta_{(j)} \text{max}({x_i}_{(j)}-\overline{x}_{(j)},\underline{x}_{(j)}-{x_i}_{(j)},0)$
in the objective function, where $\beta \in \R^{n}$ is a positive vector of coefficients penalizing the constraint violation of each element of~$x_i$. The remaining inequality constraints of the formulation~\eqref{eq:MPCT} can be relaxed in the same manner. This encoding retains the simple structure of the MPCT optimization problem, which is exploited by the numerical solver, as detailed in Section~\ref{sec:implementation}.

\subsection{Practical MPCT formulation} \label{sec:final_formulation}
This subsection summarizes the formulation to be implemented on the embedded system, which includes the features presented in the previous subsections.
Defining $h_i \doteq E x_i + F u_i, \, i\in\mathbb{I}_0^{N-1}$, $h_s \doteq E x_s + F u_s$, and the vector
$\vartheta \doteq (h_0,x_1,u_1,h_1, {\dots}, x_{N-1},u_{N-1},h_{N-1}, x_s,u_s,h_s)\in\R^{n_\vartheta},$ where $n_\vartheta = n_h+N(n+m+n_h)$, the MPCT formulation~\eqref{eq:MPCT} can be cast in its practical variant as
\begin{subequations}\label{eq:MPCT_soft}
	\begin{align}
		\min_{\substack{\vv{x,u}, \\x_s,u_s}} \; &
	    V_o(x_s,u_s) + \gamma_\beta(\vartheta ; \overline{\vartheta}, \underline{\vartheta}) + \sum_{i=0}^{N-1} \ell(x_i, u_i) \label{eq:MPCT_soft:cost} \\
		\textrm{s.t.} \; & x_0 = \hat{x}(k), \label{eq:MPCT_soft:initial_constraint}\\
		& x_{i+1} = Ax_{i} + Bu_{i} + B_d \hat{d}(k), \ i \in \mathbb{I}_{0}^{N-2}, \label{eq:MPCT_soft:model}\\
		& x_{s} = Ax_{N-1} + Bu_{N-1} + B_d \hat{d}(k), \label{eq:MPCT_soft:terminal}\\
		& x_{s} = Ax_{s} + Bu_{s} + B_d \hat{d}(k), \label{eq:MPCT_soft:eq_point_constraint}\\
        & \LBu \leq u_0 \leq \UBu, \label{eq:MPCT_soft:hard_u0}
	\end{align}
\end{subequations}
where it is worth noting that~\eqref{eq:MPCT_soft:initial_constraint} uses the estimated state~$\hat{x}(k)$ provided by the observer~\eqref{eq:observer}. The term used to include soft constraints $\gamma_\beta(\cdot)$ penalizes the violation of constraints for the variables included in~$\vartheta$. 
Note that $\vartheta$ collects all the decision variables except for~$x_0$, as it does not present box constraints, and~$u_0$, which keeps its hard constraint~\eqref{eq:MPCT_soft:hard_u0} to provide a control action satisfying the physical limits of the actuators. Additionally, $\vartheta$ includes the variables $h_i, \, i \in \mathbb{I}_0^{N-1}$, and $h_s$, which are used to soften the coupled state-input constraints defined in~\eqref{eq:MPCT:ineq_y} and~\eqref{eq:MPCT:ineq_ys}.
The vectors $\underline{\vartheta},\overline{\vartheta}\in\R^{n_\vartheta}$ collect the bounds of the box constraints in~$\vartheta$, considering also the tightening introduced by the back-off parameters, i.e.,
\begin{equation*}
    \begin{aligned}
        &\underline{\vartheta} \doteq (\underline{h} + \underline{\eta}_{h}, \, \underline{x} + \underline{\eta}_{x}, \,\underline{u}, \, \underline{h} + \underline{\eta}_{h}, \, \dots, \, \underline{x} + \underline{\eta}_{x}, \,\underline{u}, \, \underline{h} + \underline{\eta}_{h}),\\
        &\overline{\vartheta} \doteq (\overline{h} - \overline{\eta}_{h}, \, \overline{x} - \overline{\eta}_{x}, \,\overline{u}, \, \overline{h} - \overline{\eta}_{h}, \, \dots, \, \overline{x} - \overline{\eta}_{x}, \,\overline{u}, \, \overline{h} - \overline{\eta}_{h}).
    \end{aligned}
\end{equation*}

Lemma~\ref{lemma} states a sufficient condition on~$N$ under which the MPC optimization problem~\eqref{eq:MPCT_soft} is feasible. Additionally, there exists a sufficiently large $\beta$ such that the optimal solution of the soft-constrained formulation~\eqref{eq:MPCT_soft} coincides with the optimal solution of its equivalent hard-constrained version whenever the latter is feasible; this is due to the exact penalty property of $\gamma_\beta(\cdot)$~\citep[Thm. 14.3.1]{Fletcher_PMO_2000}, \citep{Kerrigan2000SoftCA}.
\begin{lemma}\label{lemma}
Problem~\eqref{eq:MPCT_soft} is feasible $\forall (\hat{x}(k),\hat{d}(k)) \in \R^n \times \R^p$ if~$N>n_c$, where~$n_c$ is the controllability index of~\eqref{eq:lin_sys}.
\end{lemma}
\begin{proof}
From Assumption~\ref{assumption},~$\begin{bmatrix} A-\mathbf{I}_n & B \end{bmatrix}$ has full row rank. Therefore,~$\forall \hat{d}(k) \in \R^p$, $\exists (x_s,u_s)$ satisfying~\eqref{eq:MPCT_soft:eq_point_constraint}. Define $e_i \doteq x_i-x_s$ and~$v_k \doteq u_i - u_s$. Then, \eqref{eq:MPCT_soft:initial_constraint} can be rewritten as~$e_0=\hat{x}(k)-x_s$,~\eqref{eq:MPCT_soft:model} as~$e_{i+1} = A e_i + B v_i, \; i \in \mathbb{I}_0^{N-2}$,~\eqref{eq:MPCT_soft:terminal} as~$0 = A e_{N-1} + B v_{N-1}$, and~\eqref{eq:MPCT_soft:hard_u0} as~$\LBu \leq v_0 + u_s \leq \UBu$. Take any~$v_0$ such that~$\LBu \leq v_0 + u_s \leq \UBu$, which determines~$e_1 = A e_0 + B v_0$. Since the pair~$(A,B)$ is controllable, starting from $e_1$, system~$e_{i+1} = A e_i + B v_i$ can reach any~$e\in\R^n$ in (at most) $n_c$ steps~\citep[\S~6.6]{Chen1995LinearST}, implying that~$\exists \{v_i\}_{i=1}^{n_c^*}$, with $n_c^* \leq n_c$, such that~$e_{n_c^*+1}=0$. Since~$N > n_c$, $\exists \{v_i\}_{i=1}^{N-1}$ such that~$0 = e_N = A e_{N-1} + B v_{N-1}$. Existence of sequences~$\{x_i\}_{i=0}^{N-1}$ and~$\{u_i\}_{i=0}^{N-1}$ satisfying~\eqref{eq:MPCT_soft:model} and~\eqref{eq:MPCT_soft:terminal} follow from undoing the change of variable.
\end{proof}

\begin{remark}
Replacing hard constraints in~\eqref{eq:MPCT_soft} with the soft constraint term~$\gamma_\beta(\vartheta ; \overline{\vartheta}, \underline{\vartheta})$ may lead to constraint violations in cases where the hard-constrained problem is feasible and has active constraints in its optimal solution. Although this effect can be suppressed by taking a sufficiently large value of the soft constraint parameter~$\beta$, this may come at the cost of an increase in the number of iterations of the optimization method in cases where the hard-constrained problem is not feasible. A balance between solve time of~\eqref{eq:MPCT_soft} and possible suboptimality in terms of unnecessary constraint violations requires careful tuning of~$\beta$. As shown in the numerical results presented in Section~\ref{sec:HiL}, the probabilisitic validation method described in the following section can be used to guide this tuning process, allowing the selection of a value of~$\beta$ that provides sufficiently small constraint violations and computation times.
\end{remark}

\section{Probabilistic validation of the controller} \label{sec:prob_validation}

In the previous section, an advanced MPC formulation with features that enable its applicability in real control scenarios was described. However, selecting the controller parameters so that the real plant operates as desired is not trivial. A practical approach is to define several controllers based on different parameter values, test their behavior on a number of experiments, and select the most satisfactory one. Still, choosing a configuration is challenging in many cases, as the selected controller may be unsuitable in scenarios other than those considered in the experiments, or it may not provide a good balance in every aspect of interest, e.g., control response, constraint satisfaction or computation time.

Furthermore, parameter selection is crucial for guaranteeing stability in the presence of uncertainty. While robust techniques such as tube-based MPC provide theoretical robust stability guarantees~\citep{LANGSON,ALVARADO_ROBUST}, they often rely on worst-case uncertainty bounds that can lead to conservative performance, or require complex offline design of invariant sets. In contrast, the back-off parameters in~\eqref{eq:tightened_constraints} are used to modify the model constraints~\eqref{eq:lin_sys_constraints}, with the objective that the controller makes the real system~\eqref{eq:non_lin_sys} satisfy its true constraints~\eqref{eq:non_lin_sys_constraints}. As it will be shown, these parameters can be selected based on empirical data without explicitly determining bounds on the discrepancy between~\eqref{eq:non_lin_sys} and~\eqref{eq:lin_sys} or on the disturbance~$d$ in~\eqref{eq:non_lin_sys}, as typically required in tube-based MPC. This selection allows us to certify with high confidence that the controller maintains stability under realistic conditions while avoiding excessive conservatism.

To enable this selection, a performance validation approach is proposed to select a controller that satisfies, in a probabilistic sense, multiple performance requirements \citep{mammarella}.
In particular, the main result in~\cite{karg_prob_validation} is extended to enable the selection of one among $M \geq 1$ candidate controllers with probabilistic guarantees across $K\geq1$ performance metrics. 

Let $w\in\R^{n_w}$ be a vector collecting the uncertain variables that uniquely determine\footnote{``Uniquely determine'' refers to the fact that a given realization~$w$ always results in the same system trajectory.} a trajectory of the system when in closed-loop with a controller $\kappa$ (e.g., system disturbances, communication delays, etc.). Therefore, the size of $w$ is typically proportional to the duration of the trajectory, considered here of $N_{t}$ time steps. It is assumed that~$w$ is a random vector following some (unknown) probability distribution~$\mathcal{W}$. Let $\phi(w;\kappa,N_t):\R^{n_w} \rightarrow \R$  be a closed-loop performance indicator \citep[Definition~1]{karg_prob_validation} that returns a scalar evaluating the performance of the controller~$\kappa$ under the realization~$w\sim\mathcal{W}$. In this work,~$\phi(\cdot)$ is defined such that lower values indicate better performance. In Section~\ref{sec:HiL}, performance indicators are defined to assess the satisfaction of the constraints~\eqref{eq:non_lin_sys_constraints} and the maximum number of iterations of the optimization solver, for which probabilistic bounds will be established.

Given $N_s$ independent and identically distributed (i.i.d.) samples $w_j\sim\mathcal{W}$, $j \in \I_1^{N_s}$, $M$ controllers $\kappa_i$, $i\in\mathbb{I}_1^M$, and $K$ performance indicators $\phi^{\ell}(\cdot)$, $\ell \in \mathbb{I}_1^K$, let~$\phi_i^{\ell}(w_j) \doteq \phi^{\ell}(w_j;\kappa_i,N_t)$ denote the performance of the controller $\kappa_i$ under the uncertainty realization $w_j$, as evaluated by the indicator $\phi^{\ell}$. In what follows, the notation~$\phi_{i}^{\ell,[r]}$ refers to the $r$-th worst performance among the $N_s$ samples, i.e., the $r$-th largest value of the set $\{\phi_i^{\ell}(w_1),\dots,\phi_i^{\ell}(w_{N_s})\}$.

The formal validation procedure, applicable to any set of controller parameters, is now presented.
\begin{proposition} \label{proposition}
Given the controllers $\kappa_i$, $i\in \I_1^M$, and the performance indicators $\phi^{\ell}(\cdot)$, $\ell\in\I_1^K$, suppose that $N_s$ \text{i.i.d.} scenarios $w_j \sim \mathcal{W}$, $j\in \I_1^{N_s}$, are generated.
Then, for any choice of the integer $1 \leq r \leq N_s$, it holds with probability no smaller than $1-\delta$ that
\begin{equation} \label{eq:probability}
    \text{P}_\mathcal{W}\{\phi_i^{\ell}(w)>\phi_{i}^{\ell,[r]}\} \leq \epsilon, \ i\in\I_1^M, \ \ell\in\I_1^K,    
\end{equation}
provided that
\begin{equation} \label{eq:pre_samples_condition}
 \sum_{q=0}^{r-1} \binom{N_s}{q} \epsilon^q(1-\epsilon)^{N_s-q} \leq \frac{\delta}{MK}.
\end{equation}
\end{proposition}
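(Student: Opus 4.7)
The plan is to reduce the joint statement \eqref{eq:probability} to a single controller-indicator pair via a union bound, and then establish the single-pair claim through a classical order-statistics argument that mirrors \cite[Thm.~1]{karg_prob_validation}. Fix a pair $(i,\ell)$ and let $F$ denote the cumulative distribution function of the scalar random variable $\phi_i^\ell(w)$ when $w\sim\mathcal{W}$. Define the threshold $t^\star \doteq \inf\{t\in\mathbb{R} : \mathrm{P}_\mathcal{W}\{\phi_i^\ell(w) > t\} \leq \epsilon\}$, i.e., the smallest value above which the upper tail of $\phi_i^\ell$ has mass no greater than $\epsilon$. The chance-constraint $\mathrm{P}_\mathcal{W}\{\phi_i^\ell(w) > \phi_i^{\ell,[r]}\} \leq \epsilon$ fails precisely when $\phi_i^{\ell,[r]} < t^\star$; this converts a statement depending on the unknown distribution $F$ into one about order statistics whose behavior is distribution-free.

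By the definition of $t^\star$, any single sample satisfies $\mathrm{P}_\mathcal{W}\{\phi_i^\ell(w) \geq t^\star\} \geq \epsilon$. Since $\phi_i^{\ell,[r]} < t^\star$ is equivalent to the event that strictly fewer than $r$ of the $N_s$ i.i.d.\ samples $w_j$ attain a value $\phi_i^\ell(w_j) \geq t^\star$, the number of such exceedances follows a binomial distribution that stochastically dominates $\mathrm{Binomial}(N_s,\epsilon)$. Using the standard monotonicity of partial binomial sums in the success probability, the failure probability can therefore be bounded as
\[
\mathrm{P}\bigl\{\phi_i^{\ell,[r]} < t^\star\bigr\} \leq \sum_{q=0}^{r-1}\binom{N_s}{q}\epsilon^q(1-\epsilon)^{N_s-q}.
\]
By hypothesis \eqref{eq:pre_samples_condition}, this upper bound does not exceed $\delta/(MK)$, which establishes the single-pair claim.

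To lift the single-pair bound to the joint statement, I would apply a union bound over the $MK$ pairs $(i,\ell)\in\mathbb{I}_1^M\times\mathbb{I}_1^K$: the probability that \eqref{eq:probability} is violated for at least one pair is at most $MK\cdot\delta/(MK)=\delta$. Therefore \eqref{eq:probability} holds simultaneously for all pairs with probability no smaller than $1-\delta$, which proves the proposition. The main subtlety I anticipate is the careful handling of atoms in $F$, which are to be expected in this setting (e.g., when $\phi^\ell$ is the integer-valued count of solver iterations, as considered in Section~\ref{sec:HiL}): quantile inversion is then not unique, which is why the argument is phrased in terms of $t^\star$ and the event $\{\phi_i^\ell \geq t^\star\}$ rather than through a direct quantile function. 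This atom-robust formulation is standard in scenario-based chance-constrained optimization and leaves the binomial tail bound intact.
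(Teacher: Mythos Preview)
Your proof is correct and follows essentially the same route as the paper, which simply invokes \cite[Thm.~1]{karg_prob_validation} and observes that accounting for $K\geq 2$ performance indicators merely increases the number of statistical events in the union bound from $M$ to $MK$. Your self-contained order-statistics argument combined with the union bound over the $MK$ pairs is precisely the content of that cited result, so the two proofs coincide at the level of the underlying mathematics; you have just unpacked what the paper leaves to the reference, including the atom-robust treatment via $t^\star$, which is a nice touch given that one of the indicators in Section~\ref{sec:HiL} is integer-valued.
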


\begin{proof}
Proposition~\ref{proposition} follows directly from Theorem~1 in \cite{karg_prob_validation}. It suffices to observe that considering $K\geq2$ performance indicators implies accounting for a greater number of statistical events with respect to the original theorem.
\end{proof}

It can be verified that~\eqref{eq:pre_samples_condition} is satisfied if
\begin{equation} \label{eq:samples_condition}
 N_s\geq \frac{1}{\epsilon}\(r-1+\mathrm{ln}\frac{MK}{\delta}+\sqrt{2(r-1)\mathrm{ln}\frac{MK}{\delta}}\).
\end{equation}
Condition~\eqref{eq:samples_condition} can be used to determine a sufficient number~$N_s$ of closed-loop experiments such that the double probability level defined in Proposition~\ref{proposition} holds. The outer level is the confidence that the probability bound~\eqref{eq:probability} is valid. This is lower bounded by $1-\delta$ and accounts for the risk that the $N_s$ samples might not be representative of~$\mathcal{W}$. The inner level~\eqref{eq:probability} is the probability of obtaining a performance (as measured by~$\phi^{\ell}$) worse or equal to the $r$-th worst among the $N_s$ validation experiments (i.e., $\phi_{i}^{\ell,[r]}$), which is at most $\epsilon$. In other words, $N_s$ selected as in~\eqref{eq:samples_condition} ensures that for given controller $\kappa_i$, $i\in \mathbb{I}_1^M$, and performance indicator $\phi^{\ell}$, $\ell \in \mathbb{I}_1^K$, when taking a new realization $w\sim\mathcal{W}$, 
the probability of obtaining a performance better than the empirical threshold~$\phi_{i}^{\ell,[r]}$
holds with probability larger than $1-\epsilon$, with confidence at least~$1-\delta$. It is also worth noting that~$\epsilon$ and~$\delta$ are independent design parameters, although they both affect the minimum $N_s$ satisfying~\eqref{eq:samples_condition}. In practice, since~$M$, $K$ and~$\delta$ have small effect on the minimum~$N_s$ satisfying~\eqref{eq:samples_condition}, the value of~$\delta$ can be set to provide a very high confidence bound. Finally, note that the parameter~$r$ is selected so as to adjust the conservativeness of the probabilistic bounds, at the expense of requiring a larger number of experiments (as visible from~\eqref{eq:samples_condition}).

\begin{remark}
    When one of the $M$ controllers is selected after applying Proposition~\ref{proposition} for $K\geq2$, the probability that a new $w\sim\mathcal{W}$ results in a better performance than the $r$-th worst case out of the $N_s$ experiments is guaranteed (with confidence at least $1-\delta$) to be at least $1-\epsilon$ for each performance indicator individually. However, no guarantee is provided that, for a given $w\sim\mathcal{W}$, all $K$ performance indicators will simultaneously be better than their respective $r$-th worst cases. Nevertheless, when $\epsilon$ is small, simultaneous satisfaction is often observed empirically.
\end{remark}

\begin{remark}
Even though there is no closed-form expression for the domain of attraction of the proposed control scheme, note that the random variable~$w$ characterizing a closed-loop experiment includes the initial conditions $(x(0),\hat{x}(0),\hat{d}(0))$. A probabilistic validation of constraint satisfaction and convergence of the closed-loop system is obtained if performance indicators that assess these properties are selected. This, in turn, provides a probabilistic guarantee that the initial conditions~$(x(0),\hat{x}(0),\hat{d}(0))$ of any new closed-loop experiment belongs to the domain of attraction of the controller.
\end{remark}

\section{Implementation of the MPC solver} \label{sec:implementation}

This section presents the practical algorithmic implementation of the MPCT formulation~\eqref{eq:MPCT_soft}. To this end, this work combines the solver of~\cite{Gracia_MPCT_solver_24} based on the Alternating Direction Method of Multipliers (ADMM) algorithm \citep{Boyd_ADMM} tailored to exploit the MPCT problem structure, with the approach presented in~\cite{Gracia_LCSS_24} to implement the soft constraints. By exploiting the MPCT problem structure, significant reductions in memory requirement and computation time per iteration are achieved compared to general-purpose QP solvers such as \cite{stellato_osqp_2020, ODonoghue_SCS_21}, even when these leverage sparse matrix representations and sparse linear algebra subroutines.

Since the numerical results in Section~\ref{sec:HiL} are of primary interest, a brief review is provided on how the ADMM algorithm is used to solve~\eqref{eq:MPCT_soft}, and how the soft constraints are handled without affecting the problem structure, thus preserving efficiency.
For a detailed explanation of the solver, see~\cite{Gracia_MPCT_solver_24, Gracia_LCSS_24}.
An implementation of this solver in C programming language can be found in the open-source Matlab package SPCIES~\citep{Spcies}.

Consider the optimization problem
\begin{subequations} \label{eq:ADMM}
    \begin{align}
        \min_{z,v} & \ f(z)+g(v)\\
        \rm{s.t.} & \ Dz=v, \label{eq:ADMM:eq}
    \end{align}
\end{subequations}
where $z\in\R^{n_z}$ and $v\in\R^{n_v}$ are the optimization variables, $D\in\R^{n_v\times n_z}$, and $f:\R^{n_z}\rightarrow\R\cup\{+\infty\}$, $g:\R^{n_v}\rightarrow\R\cup\{+\infty\}$ are convex, closed, proper functions.
For problem~\eqref{eq:ADMM}, ADMM considers an augmented Lagrangian $\mathcal{L}_\rho:\R^{n_z} \times \R^{n_v} \times \R^{n_v} \rightarrow \R\cup\{+\infty\}$ given by
\begin{equation}\label{eq:aug_lagrangian}
    \mathcal{L}_\rho(z,v,\lambda) = f(z)+g(v)+\lambda\T (Dz-v)+ \frac{\rho}{2}\|Dz-v\|_2^2,
\end{equation}
where $\lambda\in\R^{n_v}$ is the vector of dual variables and the constant $\rho>0$ is a regularization parameter.
The steps of ADMM are given by Algorithm~\ref{alg:ADMM}, which, at each iteration~$l$, first minimizes~\eqref{eq:aug_lagrangian} with respect to $z$, then with respect to $v$, and finally updates the dual variables $\lambda$.
Starting from any $(z^0, v^0, \lambda^0) \in \R^{n_z} \times \R^{n_v} \times \R^{n_v}$, the iterates $(z^l, v^l, \lambda^l)$ of Algorithm~\ref{alg:ADMM} asymptotically converge to an optimal solution $(z^*,v^*,\lambda^*)$ of~\eqref{eq:ADMM}, assuming that one exists \citep{ADMM_convergence}.
In practice, however, the algorithm is terminated when the exit conditions in Step~\ref{alg:ADMM:exit_cond} are satisfied, where $\epsilon_p > 0$ and $\epsilon_d > 0$ are exit tolerances.

\begin{algorithm}[t]
	\DontPrintSemicolon
	\caption{ADMM} \label{alg:ADMM}
	\Require{$v^{0}$, $\lambda^{0}$, $\rho>0$, $\epsilon_{p}>0$, $\epsilon_{d}>0$}
	$l \gets 0$\;
	\Repeat{{$\|Dz^{l}-v^{l}\|_{\infty} \leq \epsilon_{p}$ and $\|v^{l}-v^{l-1}\|_{\infty} \leq \epsilon_{d}$} \label{alg:ADMM:exit_cond}}{
		$z^{l+1} \gets \displaystyle \arg \min_{z} \mathcal{L}_{\rho}(z, v^{l}, \lambda^{l})$\; \label{alg:ADMM:step_z}
		$v^{l+1} \gets \displaystyle \arg \min_{v} \mathcal{L}_{\rho}(z^{l+1}, v, \lambda^{l})$\; \label{alg:ADMM:step_v}
		$\lambda^{l+1} \gets \lambda^{l} + \rho(Dz^{l+1} - v^{l+1})$\;
		$l \gets l+1$\;
	}		
	\KwOut{$\tilde{z}^{*} \gets z^{l}$, $\tilde{v}^{*} \gets v^{l}$, $\tilde{\lambda}^{*} \gets \lambda^{l}$} 
\end{algorithm}

To pose the MPCT problem~\eqref{eq:MPCT_soft} in the form of~\eqref{eq:ADMM}, let
$$z \doteq (x_0, u_0, \dots, x_{N-1}, u_{N-1}, x_s, u_s) \in \R^{n_z}, $$
which collects the decision variables of~\eqref{eq:MPCT_soft}, with $n_z = (N+1)(n+m)$;
$$v \doteq (\tx_0, \tu_0, \th_0, \dots, \tx_{N-1}, \tu_{N-1}, \th_{N-1}, \tx_{s}, \tu_{s}, \th_{s}) \in \R^{n_v},$$
with $n_v = (N+1)(n+m+n_h)$, which contains copies of the variables in $z$ to decouple the equality and inequality constraints of~\eqref{eq:MPCT_soft}, and includes variables~$\th$ to handle the joint state-input constraints; 
$$D = \blkdiag\( \begin{bmatrix}
    \mathbf{I}_n & 0\\
    0 & \mathbf{I}_m\\
    E & F
\end{bmatrix},\dots,\begin{bmatrix}
    \mathbf{I}_{n} & 0\\
    0 & \mathbf{I}_{m}\\
    E & F
\end{bmatrix}\),$$ which relates vectors $z$ and $v$ through the constraint \eqref{eq:ADMM:eq};
$$f(z) = \frac{1}{2} z\T H z + q\T z + \delta_{\{z \colon G z = b\}}(z),$$
where, defining $m_z = (N+1)n$,
\begin{subequations} \label{eq:ADMM:MPCT:ingredients}
\begin{align} 
        H &= \begin{bmatrix}
				Q & 0 & \cdots & -Q & 0\\
				0 & R & \cdots & 0 & -R\\
				0 & 0 & \ddots & \vdots & \vdots\\
				-Q & 0 & \cdots & NQ+T & 0\\
				0 & -R & \cdots & 0 & NR+S
			\end{bmatrix} \in \R^{n_z \times n_z}, \label{eq:ADMM:MPCT:ingredients:H} \\
        q &= - (0, \dots, 0, Tx_r(k), Su_r(k))\in\R^{n_z},\\
 		G &=\begin{bmatrix}
 				\mathbf{I}_n & 0 & 0 & 0 & \cdots & 0\\
 				A & B & -\mathbf{I}_n & 0 & \cdots &  0\\
 				0 & \ddots & \ddots & \ddots & 0 & \vdots\\
 				0 & 0 & A & B & -\mathbf{I}_n & 0\\
 				0 & 0 & 0 & 0 & (A-\mathbf{I}_n) & B
 			\end{bmatrix} \in\R^{m_z \times n_z}, \label{eq:ADMM:MPCT:ingredients:G}\\
        b &= (\hat{x}(k) , - B_d \hat{d}(k), \dots, - B_d \hat{d}(k))\in \R^{m_z},
 \end{align}
\end{subequations}
and, defining $\tilde{\vartheta}$ as a trimmed version of $v$ without $(\tx_0,\tu_0)$, i.e., $\tilde{\vartheta} \doteq (\th_0, \tx_1,\tu_1,\th_1, {\dots}, \tx_{N-1}, \tu_{N-1}, \th_{N-1}, \tx_{s}, \tu_{s}, \th_{s})$,
$$g(v) = \delta_{\{\tu_0 \colon \underline{u} \leq \tu_0 \leq \overline{u}\}}(\tu_0) + \gamma_{\beta/2}(\tilde{\vartheta};\overline{\vartheta},\underline{\vartheta}).$$

Step~\ref{alg:ADMM:step_z} in Algorithm~\ref{alg:ADMM} solves the equality-constrained QP problem
\begin{subequations} \label{z_problem}
    \begin{align}
z^{l+1} = \arg \min_{z} & \; \frac{1}{2} z\T P z + (p^l)\T z \label{eq:z_problem:functional}\\
\textrm{s.t.} & \; Gz=b,
    \end{align}
\end{subequations}
where $P = H + \rho D\T D$ and $p^l = q + D\T(\lambda^l - \rho v^l)$. Given that $D$ has full column rank\footnote{The vector $v$   includes $\tx_0$ to ensure that $D$ has full column rank.}, the Hessian $P$ in~\eqref{eq:z_problem:functional} is positive definite. Moreover, since the system~\eqref{eq:lin_sys} is assumed to be controllable, $G$ has full row rank. Consequently, the solution of problem~\eqref{z_problem} is unique.

Solving~\eqref{z_problem} is the most computationally demanding step of Algorithm~\ref{alg:ADMM}. To attain efficiency, the particular semi-banded structure\footnote{See~\citet[Definition~1]{Gracia_MPCT_solver_24}.} of $P$ is exploited. The method works as follows. First, the equality-constrained problem~\eqref{z_problem} can be solved using the explicit solution method described in~\citet[\S 5.5.3]{Boyd_ConvexOptimization}. This method relies on finding a solution of the Karush-Kuhn-Tucker (KKT) optimality conditions, which in this case amounts to solving a linear system of equations. This, in turn, can be decomposed in three simpler linear systems
\begin{subequations} \label{eq:semi_banded_linear_systems}
    \begin{align}
         P \xi &= p^l, \label{eq:1_semi_banded_linear_system}\\
         W \mu^* &= -(G \xi + b), \label{eq:2_semi_banded_linear_system} \\
         Pz^{l+1} &= -(G\T \mu^* + p^l),\label{eq:3_semi_banded_linear_system}
    \end{align}
\end{subequations}
where $W \doteq G P^{-1} G\T \in \R^{m_z \times m_z}$ is also semi-banded, and $\mu^* \in \R^{m_z}$ is the dual optimal solution of problem~\eqref{z_problem}.
The Woodbury matrix identity \citep{woodbury} is leveraged to decompose the semi-banded matrices arising in~\eqref{eq:semi_banded_linear_systems}. In particular, steps~\eqref{eq:1_semi_banded_linear_system} and~\eqref{eq:3_semi_banded_linear_system} are computationally inexpensive, as each of them involves solving two block-diagonal systems separated by a simpler step involving two small-scaled matrix-vector products. In turn, \eqref{eq:2_semi_banded_linear_system} is more demanding due to the structure of~$W$, as instead of the two block-diagonal systems, it requires solving two systems of the form $\tilde{\Gamma} w = c$, where $\tilde{\Gamma}$ has a specific banded-diagonal structure. This structure is exploited in \cite{Krupa_TCST_21}, where the authors use the Cholesky factorization of $\tilde{\Gamma}$ and apply forward-backward substitution; a strategy also adopted in this solver.

Step~\ref{alg:ADMM:step_v} of Algorithm~\ref{alg:ADMM} implies solving the separable optimization problem
\begin{subequations}
\begin{align*}
    v^{l+1} = \arg \min_{v} &\; \frac{\rho}{2} \| Dz^{l+1}-v \|_2^2 - (\lambda^l)\T v + \gamma_{\beta{/}2}(\tilde{\vartheta} ; \overline{\vartheta}, \underline{\vartheta}) \\
    \textrm{s.t.} &\; \underline{u} \leq \tu_0 \leq \overline{u}.
\end{align*}
\end{subequations}
Defining $c_{(j)} \doteq (Dz^{l+1})_{(j)} + \frac{1}{\rho}\lambda^l_{(j)}$, it follows from the definition of $D$ that the components of $v^{l+1}$ related to $\tx_0$, i.e., $v^{l+1}_{(j)}, \, j\in \mathbb{I}_{1}^{n}$, can be computed explicitly as
\begin{equation*}
    v^{l+1}_{(j)} = c_{(j)} = z^{l+1}_{(j)}+\frac{1}{\rho}\lambda^l_{(j)},
\end{equation*}
and the components related to $\tu_0$, i.e., $v^{l+1}_{(j)}, \, j\in \mathbb{I}_{n+1}^{n+m}$, as
\begin{equation*}
    \begin{aligned}
        v^{l+1}_{(j)} &= \min\(\max\(c_{(j)},\underline{u}_{(j-n)}\),\overline{u}_{(j-n)}\) \\
        &= \min\(\max\(z^{l+1}_{(j)}+\frac{1}{\rho}\lambda^l_{(j)},\underline{u}_{(j-n)}\),\overline{u}_{(j-n)}\).
    \end{aligned}
\end{equation*}
Defining $i \doteq j-n-m$, the remaining elements of $v^{l+1}$, i.e., $v^{l+1}_{(j)}, \, j\in \mathbb{I}_{n+m+1}^{n_v}$, can be computed as
 \begin{equation*}
 v^{l+1}_{(j)} = 
\begin{cases}
c_{(j)} {+} \frac{\beta_{(i)}}{2\rho} & \hspace{-0.3em} \text{if} \ c_{(j)} {+} \frac{\beta_{(i)}}{2\rho} \leq \underline{\vartheta}_{(i)}, \\

\underline{\vartheta}_{(i)} & \hspace{-0.3em} \text{if} \ c_{(j)} {+} \frac{\beta_{(i)}}{2\rho} > \underline{\vartheta}_{(i)} \ \text{and} \ c_{(j)} < \underline{\vartheta}_{(i)}, \\

c_{(j)} & \hspace{-0.3em} \text{if} \ \underline{\vartheta}_{(i)} \leq c_{(j)} \leq \overline{\vartheta}_{(i)}, \\

\overline{\vartheta}_{(i)} & \hspace{-0.3em} \text{if} \ c_{(j)} > \overline{\vartheta}_{(i)} \ \text{and} \ c_{(j)} {-} \frac{\beta_{(i)}}{2\rho} < \overline{\vartheta}_{(i)}, \\

c_{(j)} {-} \frac{\beta_{(i)}}{2\rho} & \hspace{-0.3em} \text{if} \ c_{(j)} {-} \frac{\beta_{(i)}}{2\rho} \geq \overline{\vartheta}_{(i)},

\end{cases}
\end{equation*}
thus providing an explicit expression for updating~$v^{l+1}$ thanks to the specific structure of $\gamma_\beta$. As a result, this implementation of MPCT leads to solution times comparable to those of the standard MPC equivalent formulation, that is, without the artificial reference and soft constraints; see \cite{Gracia_MPCT_solver_24, Gracia_LCSS_24} for further details.

To further improve the performance of the algorithm, a simple warm-start procedure that initializes the ADMM variables $v$ and $\lambda$ is employed. In particular, the initial values $v^0$ and $\lambda^0$ at sample time $k$ are taken, respectively, as the optimal solutions $v^*$ and $\lambda^*$ obtained at sample time $k-1$, shifted $n+m+n_{h}$ positions forward. As this method does not determine the last $n+m+n_{h}$ components of $v^0$ and $\lambda^0$, these are simply copied from those of~$v^*$ and~$\lambda^*$ obtained at $k-1$, respectively.

\section{Hardware-in-the-loop results} \label{sec:HiL}

A HiL scheme is set up in which a Siemens S7-1500 PLC implements~\eqref{eq:MPCT_soft} and the estimator~\eqref{eq:observer} to control the nonlinear system~\eqref{eq:non_lin_sys}. The system dynamics are simulated via numerical integration on a PC, where the PLC solve times and communication delays affect the simulation. The PLC and the PC communicate via OPC UA \citep{OPC_UA}, with the former acting as the OPC server and the latter as the client. Figure~\ref{fig:HiL_scheme} shows a scheme of the HiL configuration.

\begin{figure}
\centering
 \includegraphics[width=\linewidth]{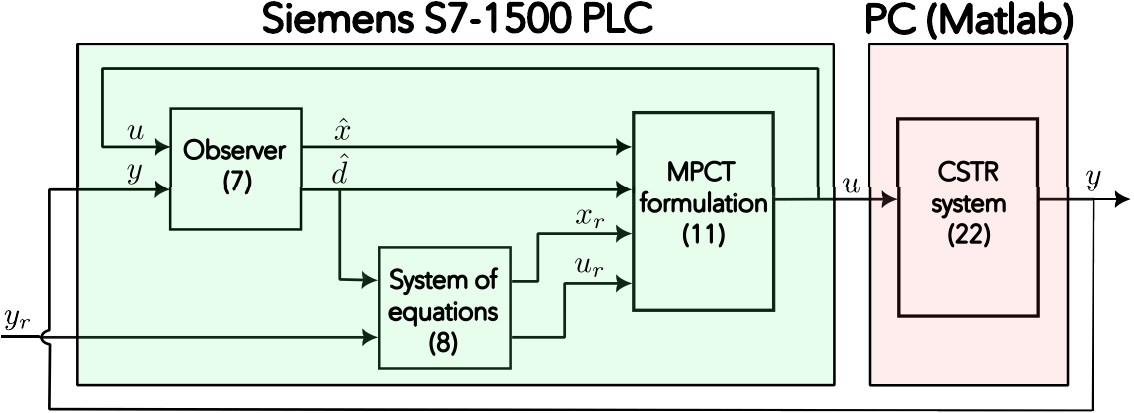}
        \caption{HiL scheme.}
        \label{fig:HiL_scheme}
\end{figure}

The PLC operates in cyclic mode: this is an asynchronous mode of operation, where a new cycle begins immediately after the previous one ends, and its duration depends on the tasks executed within the cycle (e.g., ADMM iterations, monitoring or communication), see~\cite{Krupa_TCST_21}. To avoid excessive consecutive CPU time on the solver task, the ADMM iterations are distributed across multiple PLC cycles, performing a fixed number of iterations in each cycle and verifying the termination criterion at the end. If satisfied, the control input is applied to the system; otherwise, the PLC continues to iterate from its last state.
This approach allows other processes in each cycle to run more frequently rather than waiting for the optimization solver to finish.
Further details on splitting the ADMM iterations on PLCs can be found in~\cite{Krupa_TCST_21}.
 
 The probabilistic validation approach from Section~\ref{sec:prob_validation} is used to select the hyperparameters of the MPCT controller and to validate its long-term operation. Specifically, a set of back-off parameters is selected to tighten the model constraints, and the weights $\beta$ used for the soft constraints. Since constraint satisfaction is considered critical in this case, the aim is to reduce constraint violations while keeping the number of iterations and computation time of the MPC solver low.

\subsection{Testing system: continuous stirred-tank reactor}\label{sec:cstr}

The system considered is a realistic model of a CSTR equipped with a cooling jacket, described in \cite{CSTR_SUBRAMANIAN}. The main reaction, $R_{AB}$, is the transformation of a substance \textit{A} into a product \textit{B}. However, \textit{A} also undergoes an undesirable parallel reaction, $R_{AD}$, producing a product \textit{D}. Additionally, an unwanted consecutive reaction, $R_{BC}$, transforms \textit{B} into a substance \textit{C}. The whole process is summarized as
\begin{equation}
\begin{aligned}
    A &\overset{k_1}\longrightarrow B \overset{k_2}\longrightarrow C,\\
    2A &\overset{k_3}\longrightarrow D,
\end{aligned}
\end{equation}
where $k_i, i \in \I_1^3$, are the respective reaction velocities.
The system is described by the nonlinear differential equations%
\begin{subequations} \label{eq:CSTR_sys}
    \begin{align}
        \dot{c}_A =& \frac{F_I}{V_R} (c_{A0}-c_A) - k_1(\theta) c_A - k_3(\theta) {c_A}^2, \\
        \dot{c}_B =& -\frac{F_I}{V_R} c_B + k_1(\theta) c_A - k_2(\theta)c_B, \\
        \dot{\theta} =& \frac{F_I}{V_R} (\theta_d-\theta) - \frac{1}{\rho C_p}( k_1(\theta) c_A \Delta H_{R_{AB}} {+} k_2(\theta) c_B \Delta H_{R_{BC}} \nonumber\\ & + k_3(\theta){c_A}^2 \Delta H_{R_{AD}} ) + \frac{k_w A_R}{\rho C_p V_R}(\theta_K-\theta), \\
        \dot{\theta}_K =& \frac{1}{m_K C_{pK}}(P_K + k_w A_R (\theta - \theta_K)),
\end{align}
\end{subequations}%
where $c_A$ [mol/l] and $c_B$ [mol/l] are the concentrations of $A$ and $B$, respectively, and $\theta$ [$^\circ \mathrm{C}$] and $\theta_K$ [$^\circ \mathrm{C}$] are the temperatures of the reactor and cooling jacket, respectively. Each velocity $k_i$ depends on the temperature $\theta$ as
\begin{equation}
    k_i(\theta) = k_{i0} \cdot \text{exp}\(\frac{E_i}{\theta + 273.15}\), \ i \in \I_1^3.
\end{equation}
The system has two inputs: the normalized input flow $F_N\doteq\frac{F_I}{V_R}\,[\mathrm{h^{-1}}]$; and the heat removal power $P_K\,\mathrm{[kJ/h]}$ of the cooling jacket.
Two controlled outputs are considered: the concentration $c_B$; and the production rate of $B$, that is, $p_B \doteq c_B F_I \, \mathrm{[mol/h]}$.
Table~\ref{tab:CSTR_parameters} includes a full list of the system parameters, with further details available in~\cite{CSTR_SUBRAMANIAN}.
The constraints considered are $3  \leq F_N \leq 35 $, $-9000 \leq P_K \leq 0$, $\theta \leq 117$, $c_B \geq 0.72$ and $p_B \geq 155$. Note that the constraint on $p_B$ is a coupled state-input constraint. Additionally, the system is subject to disturbances due to fluctuations in the temperature~$\theta_d$ of the input flow~$F_I$, which is modeled according to the discrete-time Singer model 
\begin{equation} \label{eq:Singer_model}
    \theta_d(k+1) = 0.99 \cdot \theta_d(k) + (1-0.99) \cdot \theta_0 + w_d(k),
\end{equation}
where $\theta_0=104.9\,\mathrm{^\circ C}$ is the nominal temperature value and $w_d(k)$ is a white Gaussian noise sampled from a normal distribution with zero mean and variance~$0.01$.

\begin{table}[t]
    \centering
    \small
    \setlength{\tabcolsep}{2pt}
    \begin{tabular}{c|c|c|c}
        Parameter & Symbol & Value & Unit \\ \hline 
        Collision factor of $R_{AB}$ & $k_{10}$ & $1.287{\cdot}10^{12}$ & [1/h]\\
        Collision factor of $R_{BC}$ & $k_{20}$  & $1.287{\cdot}10^{12}$ & [1/h]\\
        Collision factor of $R_{AD}$ & $k_{30}$  & $9.043{\cdot}10^9$ & [1/(mol$\cdot$h)]\\
        Activation energy of $R_{AB}$ & $E_1$ & -9758.3 & [K] \\
        Activation energy of $R_{BC}$ & $E_2$ & -9758.3 & [K] \\
        Activation energy of $R_{AD}$ & $E_3$ & -8560 & [K] \\
        Concentration of $A$ in $F_I$ & $c_{A0}$ & 5.1 & [mol/l] \\
        Temperature of fluid in $F_I$ & $\theta_d$ & 102--108 & [$^\circ $C] \\
        Enthalpy of $R_{AB}$ & $\Delta H_{R_{AB}}$ & 4.2 & [kJ/mol]\\
        Enthalpy of $R_{BC}$ & $\Delta H_{R_{BC}}$ & -11 & [kJ/mol]\\
        Enthalpy of $R_{AD}$ & $\Delta H_{R_{AD}}$ & -41.85 & [kJ/mol]\\
        Density of fluid in $F_I$ & $\rho$ & 0.9342 & [kg/l] \\
        Heat capacity & $C_p$ & 3.01 & [kJ/(kg$\cdot$K)] \\
        Heat transfer coefficient  & \multirow{2}{*}{$k_w$} & \multirow{2}{*}{4.032} & \multirow{2}{*}{[kJ/(hm$^{2}$$\cdot$K)]} \\
        for cooling jacket & & & \\
        Surface of cooling jacket & $A_R$ & 0.215 & [m$^2$] \\
        Reactor volume & $V_R$ & 0.01 & [m$^3$] \\
        Coolant mass & $m_K$ & 5 & [kg] \\
        Heat capacity of coolant & $C_{pK}$ & 2 & [kJ/(kg$\cdot$K)] \\
        Filter time constant of $F_N$ & $T_{F_N}$ & 250 & [s]\\
        Filter time constant of $P_K$ & $T_{P_K}$ & 125 & [s] \\
        \hline
    \end{tabular}
    \caption{Parameters of the CSTR system.}
    \label{tab:CSTR_parameters}
\end{table}

A low-pass filter is applied to the control inputs, defined by
\begin{equation}
    \dot{u}_f(t) = T_f^{-1}(u(t)-u_f(t)), \label{eq:filter}
\end{equation}
where $T_f \doteq \diag\((T_{F_N}, T_{P_K})\)$ are the filter time constants (see Table~\ref{tab:CSTR_parameters}), $u_f \doteq (F_N,P_K) \in \R^m$ are the filtered inputs applied in~\eqref{eq:CSTR_sys}, and $u \in \R^m$ the inputs of the filter, computed by the MPCT solver. Then, the system dynamics are extended by incorporating \eqref{eq:filter} into \eqref{eq:CSTR_sys}, resulting in a plant with state $x \doteq \(c_A, c_B, \theta, \theta_K, u_f\) \in \R^n$, input $u \in \R^m$, and output $y \doteq (c_B, p_B) \in \R^p$. The use of a low-pass input filter is common in real industrial scenarios to avoid sudden changes in the inputs, improving the life expectancy of the actuators. Indeed, most industrial valves integrate a local PID controller for this reason. 

A discrete linear model \eqref{eq:lin_sys} of the system is obtained by linearizing around the equilibrium point
\begin{equation}
    \begin{aligned} \label{eq:eq_point}
       & F_N = 25\, \mathrm{h}^{-1}, \, P_K=-4000\,\mathrm{kJ/h}, \, c_A = 3.161\, \mathrm{mol/l}, \\ 
       & c_B = 0.912 \, \mathrm{mol/l}, \, \theta = 108.53\,^\circ\mathrm{C}, \, \theta_K = 103.91\,^\circ\mathrm{C},
    \end{aligned}
\end{equation} and discretizing with a sample time of $75$ seconds. Finally, the matrices in~\eqref{eq:sys_constraints_y} are taken as $E=C$ and $F = 0$.

\subsection{Controller and estimator parameters} \label{sec:controller_design}
For the practical MPCT formulation~\eqref{eq:MPCT_soft}, the prediction horizon length is taken as $N {=} 7$, and the weight matrices as $Q {=} \diag\((0.1,0.1,5,5,20,30)\)$, $R {=} \diag\((20,30)\)$,
$S = \diag\((10, 50)\)$ and $T = \diag\((0.7,0.7,35,35,10,50)\)$. 
The ADMM parameters are set to $\rho = 40$, $\epsilon_p = 5 \cdot 10^{-3}$ and $\epsilon_d = 1 \cdot 10^{-3}$, with~$\rho$ empirically selected as the one providing the smallest maximum number of iterations in some initial tests.
For better performance of ADMM, the MPC optimization problem is preconditioned by scaling the input and state using the matrices $N_u = \diag\((2,10^{-3})\)$ and $N_x = \diag\((5,20,1,1,2,10^{-3})\)$, respectively, and the coupled constraints~\eqref{eq:sys_constraints_y} using $N_c = \diag\((20,0.5)\)$. For further details on this procedure, see~\cite{Krupa_TCST_21}.

As explained in~\cite{MAEDER}, the only requirement for the design of $L_x$ and $L_d$ is that the estimator~\eqref{eq:observer} is stable. In this case, the controller-observer duality is exploited by applying the LQR design method to~\eqref{eq:observer}, with weights for the LQR stage cost $R'=\diag\((10^3,10^3)\)$ and $Q'=\diag\((1,0.01,1,1,10,10,10^4,10^4)\)$.
Furthermore, as the resulting linear model has no integrator states, $B_d$ is taken as~$0$ to simplify the observer; see~\cite{MAEDER} for further details and guidelines on the design of the estimator for offset-free control.

The next subsection discusses the choice of the weights $\beta$ and the back-off parameters of the controller through the probabilistic validation procedure of Section~\ref{sec:prob_validation}.

\subsection{Validation experiments design} \label{sec:validation_experiments_design}
The selection of hyperparameters in MPC controllers, such as the cost function matrices, is typically made offline through simulations.
However, the impact that some hyperparameters will have on the controller, such as the weights~$\beta$ and back-off parameters, can be more difficult to estimate. Therefore, some parameters are fixed in advance to explore multiple controller configurations depending on the remaining, harder-to-tune parameters.

The objective is to validate the long-term behavior of the closed-loop system for a given selection of MPCT hyperparameters. Specifically, the aim is to verify that the controller performs satisfactorily under the real operating conditions of the plant, where its outputs are required to track a piecewise constant reference.
To this end, each of the $N_s$ validation experiments consists of a closed-loop simulation of the system trajectory over $N_t = 100$ sample times, where:
\begin{itemize}
    \item Before recording each experiment, the closed-loop system is subject to an initialization phase that lasts $40$ sample times using a random reference $y_{r1} \doteq \(c_{Br_{1}}, p_{Br_{1}}\)$, with $c_{Br_{1}}$ and $p_{Br_{1}}$ sampled from uniform distributions within the intervals $[0.73,1.094]$ and $[155,301]$, respectively. Here, the state $x$ starts at the equilibrium point~\eqref{eq:eq_point}, the Singer model~\eqref{eq:Singer_model} at $\theta_0$, and the estimator~\eqref{eq:observer} at the origin. Therefore, the time index $k=0$ denotes the first sample time following the initialization phase, where the closed-loop system is already in steady-state regime.
    \item At a random sample time~$t_r$ drawn from a uniform distribution of integers in the range $[10,50]$, the reference changes to a new value $y_{r2}\doteq \(c_{Br_{2}}, p_{Br_{2}}\)$, sampled from the same probability distribution as~$y_{r1}$.
\end{itemize}
The length of the experiments and the time window for changing the reference were selected so that the experiments could capture both the transient and the steady-state behavior of the closed-loop system. It is observed that the controller design in Section~\ref{sec:controller_design} typically steers the system to steady-state regime in approximately $40$~sample times, which explains the selected length of the initialization phase. Note also that the reference is modeled as a random variable to validate the controller in a wide range of operating regimes.
Referring to the nomenclature used in Section~\ref{sec:prob_validation}, in the experimental setup, a sample $w$ drawn from $\mathcal{W}$ includes: the random reference values $y_{r1}$ and $y_{r2}$; the reference change time instant $t_r$; the initial state $x(0)$; the estimate of the initial state $\hat{x}(0)$ and disturbance $\hat{d}(0)$; the sequence of disturbances $\theta_d(0),\dots,\theta_d(N_t-1)$; and any other stochastic variable affecting the closed-loop system, e.g., the communication times along an experiment.

\begin{table*}[t]
    \centering
    \begin{tabular}{c||c|c|c||c|c|c}
    	& \multicolumn{3}{c||}{\textbf{Controller validation}} 
     	& \multicolumn{3}{c}{\textbf{Empirical verification of~\eqref{eq:probability}}} \\ 
 	\hline
        Hyper-  & Constraint & ADMM & Feasible & Experiments & Experiments & Feasible \\ 
        parameters & violations & iterations & trajectories & where & where & trajectories\\
        $\{\overline{\eta}_\theta,\underline{\eta}_c, \underline{\eta}_p, \beta\}$& $\phi^{1,[5]}$ & $\phi^{2,[5]}$ & ($\phi^1=0$) & $\phi^1 \leq \phi^{1,[5]}$ & $\phi^2 \leq \phi^{2,[5]}$ & ($\phi^1=0$)\\ \hline
        $\{0,0,0,100\}$ & 6.6753 & 177 & 96.9\% & 99.6\% & 99.2\% & 96.3\%\\ \hdashline
        $\{0,0,0,300\}$ & 6.6016 & 637 & 96.9\% & 99.6\% & 99.7\% & 96.3\% \\ \hdashline
        $\{0,0,10,100\}$ & 1.4256 & 240 & 98.1\% &  99.5\% &  99.1\% &  97.6\%\\ \hdashline
        $\{0,0,10,300\}$ & 1.4414 & 736 & 98.1\% & 99.5\% & 99.7\% & 97.6\%\\ \hdashline
        $\textbf{\{1.5,0.08,20,100\}}$ & 0.0045 & 246 & 99.5\% & 99.7\% & 99.5\% & 99.6\% \\ \hdashline
        $\{1.5,0.08,20,300\}$ & 0.0048 & 740 & 99.5\% & 99.7\%  & 99.5\% & 99.6\% \\ \hdashline
        $\{3,0.08,20,100\}$ & 0 & 246 & 99.8\% & 99.9\% & 99.2\% & 99.9\% \\ \hdashline
        $\{3,0.08,20,300\}$ & 0 & 740 & 99.8\% & 99.9\% & 99.5\% & 99.9\% \\ \hline
    \end{tabular}\\
    \caption{On the left: Probabilistic bounds (see~\eqref{eq:probability}) obtained for a subset of the $M = 54$ candidate controllers with $K=2$,  $\delta = 1\cdot10^{-6}$, $r=5$, $\epsilon = 0.03$ and $N_s=1156$ according to~\eqref{eq:samples_condition}. The selected controller~$C_1$ is highlighted in bold. On the right: Empirical verification of the probabilistic bounds.  The percentage of feasible trajectories is to be considered as complementary information (not covered in this case by the result of Proposition~\ref{proposition}).}
    \label{tab:validation_results}
\end{table*}

\begin{remark}
    The proposed validation experiments are taken to resemble the typical operation of the plant, capturing both the transient and steady-state behavior of the closed-loop system. That is, they are structured so that the probabilistic bounds obtained are meaningful for assessing long-term closed-loop performance, either under a constant reference or when it undergoes a step change.
\end{remark}

\subsection{Performance indicators and hyperparameters}

The following performance indicators are considered
\begin{subequations}\label{eq:perf_ind}
    \begin{align}
        \phi^1(w) \doteq & \sum_{k=0}^{N_t-1}\rho_\theta\(\max\(\theta\(k\)-\overline{\theta},0\)\)^2 \nonumber\\ 
        & + \rho_{c}\( \max\(\underline{c}_B-c_B \(k\), 0\)\)^2 \nonumber\\ 
        & + \rho_{p}\(\max\(\underline{p}_B-p_B\(k\), 0\)\)^2, \label{eq:perf_ind_1}\\
        \phi^2(w) \doteq & \max_{k=0,\dots,N_t-1} N_{iter}(k), \label{eq:perf_ind_2}
    \end{align}
\end{subequations}
where $\overline{\theta} = 117$, $\underline{c}_B = 0.72$ and $\underline{p}_B = 155$ are the bounds of the constrained variables as defined in Section~\ref{sec:cstr}. The positive scalars $\rho_\theta$, $\rho_c$ and $\rho_p$ normalize and weight the violation of the constraints, and $N_{iter}(k)$ denotes the number of iterations performed by the ADMM algorithm at time~$k$.
Parameters are set to $\rho_\theta = 30$, $\rho_c = 150$ and $\rho_p = 1$.

The performance indicator $\phi^1$ defined in~\eqref{eq:perf_ind_1} gives a measure of constraint violations during a closed-loop experiment. Constraint violations of the inputs are not penalized, as the applied inputs are hard constrained by~\eqref{eq:MPCT_soft:hard_u0} and fit the form required in Section~\ref{sec:cstr}. The $\max(\cdot)$ terms in $\phi^1$ are squared to strongly penalize cases in which the variables significantly exceed their bounds. Note that $\phi^1(\cdot)$ is equal to~$0$ if and only if the closed-loop trajectories are feasible throughout the entire simulation. The indicator~$\phi^2$ in~\eqref{eq:perf_ind_2} returns the maximum number of ADMM iterations throughout an experiment. As the number of ADMM iterations is directly related to the computation time of the solver, $\phi^2$ is used to validate that the control action $u(k)$ is computed in a time that is significantly shorter than the sample time of the system.

The proposed probabilistic validation framework is applied to select the values of the back-off parameters and the weights~$\beta$ of the soft constraints. To this end, a preliminary empirical analysis of their effect on the system via several closed-loop simulations is first conducted, enabling the identification of a sensible range of parameter values. Next, based on this range, a set of $M$ candidate controllers is defined, determined by all possible combinations of the hyperparameter values $\overline{\eta}_\theta \in \{0,1.5,3\}$,  $\underline{\eta}_c \in \{0,0.04,0.08\}$, $\underline{\eta}_p \in \{0,10,20\}$ and $\beta \in \{\vv{1}_{n_\vartheta} \cdot 100,\vv{1}_{n_\vartheta} \cdot 300\}$, where $\overline{\eta}_\theta$, $\underline{\eta}_c$ and $\underline{\eta}_p$ tighten the bounds $\overline{\theta}$, $\underline{c}_B$ and $\underline{p}_B$, respectively.
This results in a total of $M = 54$ candidate controllers. As in this particular case only vectors~$\beta$ with identical elements are considered, given a scalar $c\in\R$, $\beta=c$ will stand for $\beta=\vv{1}_{n_\vartheta} \cdot c$ in what follows.

Setting $\delta = 1\cdot10^{-6}$, $\epsilon = 0.03$ and $r=5$, and applying~\eqref{eq:samples_condition} with $K=2$, it is concluded that performing $N_s \geq 1156$ closed-loop experiments is sufficient to individually bound the probabilities defined in~\eqref{eq:probability} with a confidence of at least $1-\delta$. Accordingly, $N_s=1156$ is selected.

\begin{figure*}[t]
    \centering
    \begin{subfigure}{0.33\textwidth}
        \includegraphics[width=\linewidth, height = 2.8cm]{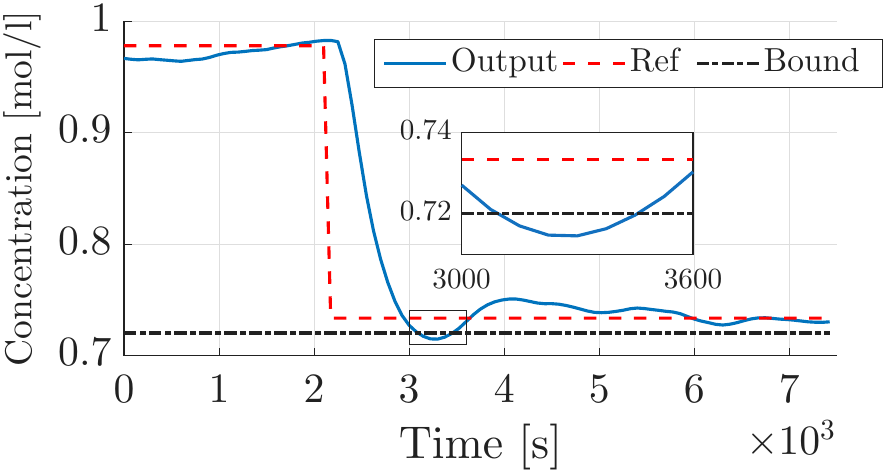}
        \caption{Output $c_B$.}
        \label{subfig:exp:output_1}
    \end{subfigure}%
    \hfill%%
    \begin{subfigure}{0.33\textwidth}
        \includegraphics[width=\linewidth, height = 2.8cm]{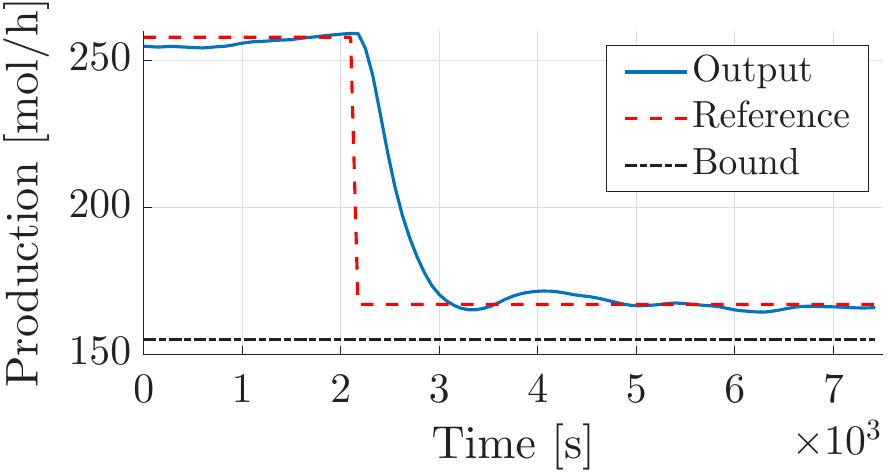}
        \caption{Output $P_B$.}
        \label{subfig:exp:output_2}
    \end{subfigure}%
    \hfill%%
    \begin{subfigure}{0.33\textwidth}
        \includegraphics[width=\linewidth, height = 2.8cm]{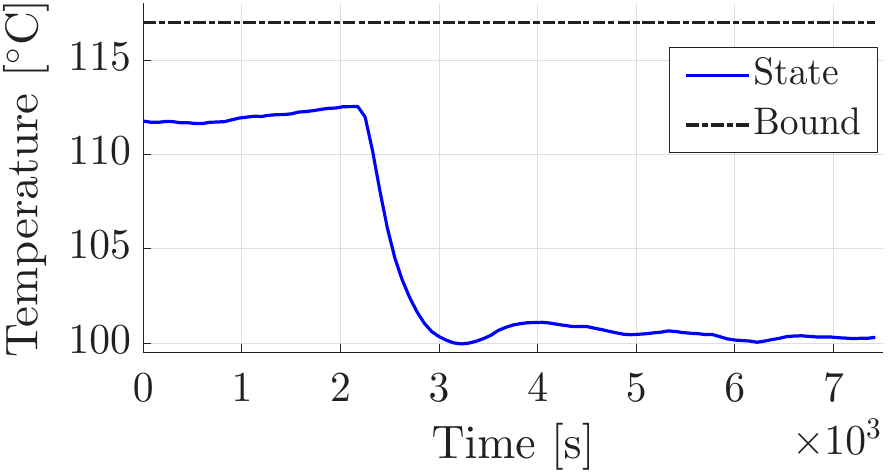}
        \caption{State $\theta$.}
        \label{subfig:exp:state_3}
    \end{subfigure}%
    \hfill%%
    \vskip\baselineskip
    \centering
    \begin{subfigure}{0.33\textwidth}
        \includegraphics[width=\linewidth, height = 2.8cm]{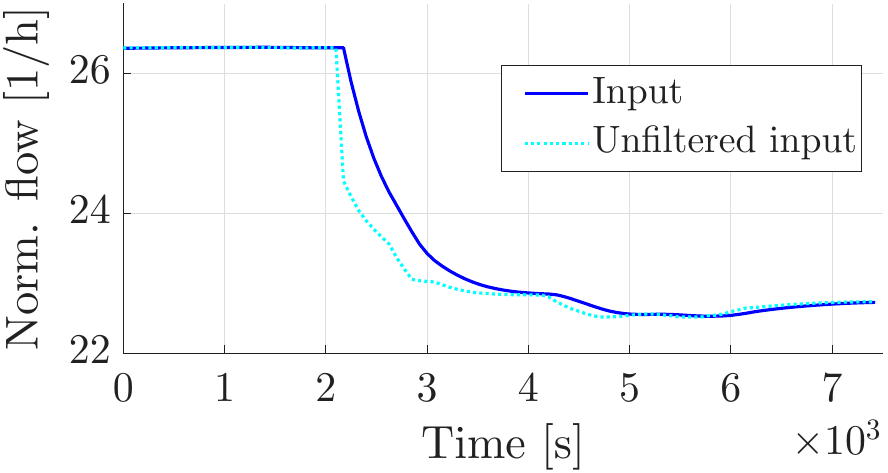}
        \caption{Input $F_N$.}
        \label{subfig:exp:input_1}
    \end{subfigure}%
    \hfill%%
    \begin{subfigure}{0.33\textwidth}
        \includegraphics[width=\linewidth, height = 2.8cm]{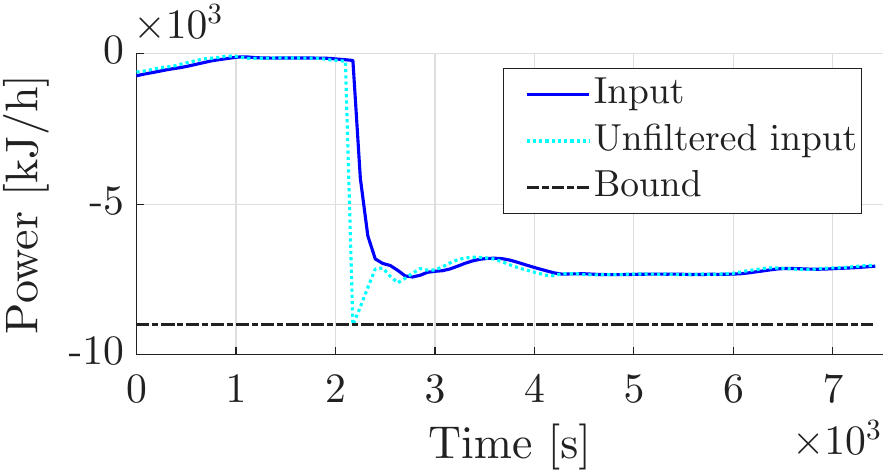}
        \caption{Input $P_K$.}
        \label{subfig:exp:input_2}
    \end{subfigure}%
    \hfill%%
    \begin{subfigure}{0.33\textwidth}
        \includegraphics[width=\linewidth, height = 2.8cm]{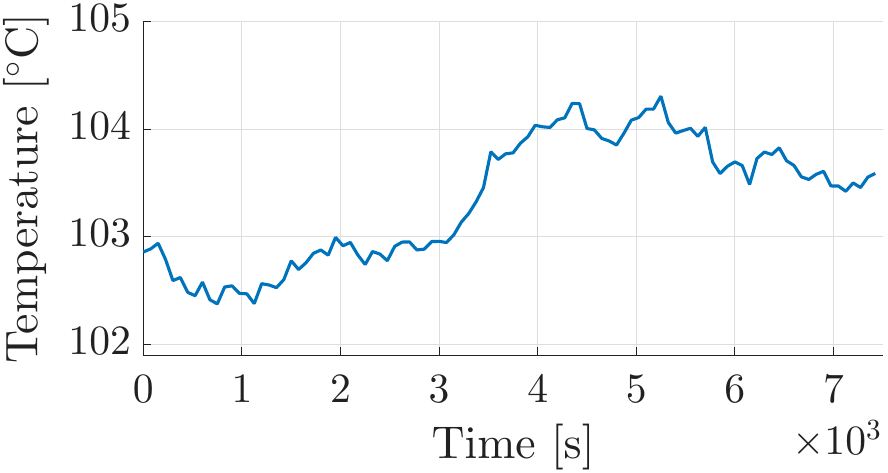}
        \caption{Disturbance $\theta_d$.}
        \label{subfig:exp:disturbance}
    \end{subfigure}%
    \caption{Closed-loop simulation of the CSTR controlled with~$C_1$. This experiment corresponds to the $5$-th worst case of $C_1$, where $\phi^{1}(\cdot) = 0.0045$ (see Table~\ref{tab:validation_results}).}
    \label{fig:experiment}
\end{figure*}

\subsection{Validation experiments}

Proposition~\ref{proposition} does not require evaluating all the $M$ candidate controllers for~\eqref{eq:probability} to hold individually for each controller. In other words, \eqref{eq:probability} is valid for the controller setup~$i$, as soon as $\phi_{i}^{\ell,[r]}$ is experimentally evaluated. Therefore, once an acceptable controller is found, the performance of other candidate setups does not need to be assessed~\citep{karg_prob_validation}.

The left part of Table~\ref{tab:validation_results} shows the performance of the $r$-th worst case, as measured by the indicators in~\eqref{eq:perf_ind}, obtained for a subset of the $M$ controllers. 
Column ``feasible trajectories'' displays the percentage of experiments (out of $N_s$) that resulted in no constraint violations.
The results indicate that, as the back-off parameters $\overline{\eta}_\theta$, $\underline{\eta}_c$ and $\underline{\eta}_p$ increase, the number of experiments resulting in feasible trajectories grows. However, increasing $\beta$ does not appear to significantly improve constraint satisfaction, while the number of ADMM iterations is negatively affected.
Another point to consider is that, even though controllers with larger back-off may seem to perform better according to Table~\ref{tab:validation_results}, excessive back-offs can often make the reference $(x_r(k),u_r(k))$ provided by $\eqref{eq:ref_sys}$ not admissible, and make the controller behave too conservatively. Since constraint satisfaction is considered to be crucial in this case study, a set of back-off parameters is selected to provide low levels of constraint violations. Specifically, the controller with parameters $\{\overline{\eta}_\theta,\underline{\eta}_c, \underline\eta_p, \beta\} = \{1.5,0.08,20,100\}$ is selected, which is denoted as~$C_1$. For comparison, $C_0$ denotes the controller with $\beta=100$ and no back-off, i.e., $\{\overline{\eta}_\theta,\underline{\eta}_c, \underline\eta_p, \beta\} = \{0,0,0,100\}$. From Table~\ref{tab:validation_results} it is concluded that in a new experiment, with high confidence $1-\delta$, the closed-loop system will violate the constraints by no more than $0.0045$, as measured by $\phi^1$ defined in~\eqref{eq:perf_ind_1}, with a probability of at least $100 \cdot(1-\epsilon) = 97\%$. With the same confidence and probability bound, the maximum number of ADMM iterations in a new experiment will be at most~$246$.
Figure~\ref{fig:experiment} shows the closed-loop experiment resulting in the $5$-th worst-case performance according to~$\phi^1$.

\begin{figure*}[t]
    \centering
    \begin{subfigure}{0.33\textwidth}
        \includegraphics[width=\linewidth, height = 2.8cm]{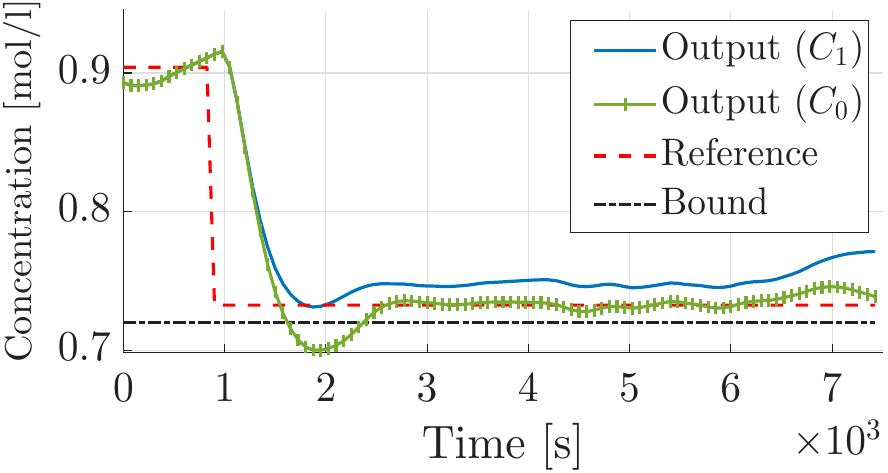}
        \caption{Output $c_B$.}
        \label{subfig:exp_2:output_1}
    \end{subfigure}%
    \hfill%%
    \begin{subfigure}{0.33\textwidth}
        \includegraphics[width=\linewidth, height = 2.8cm]{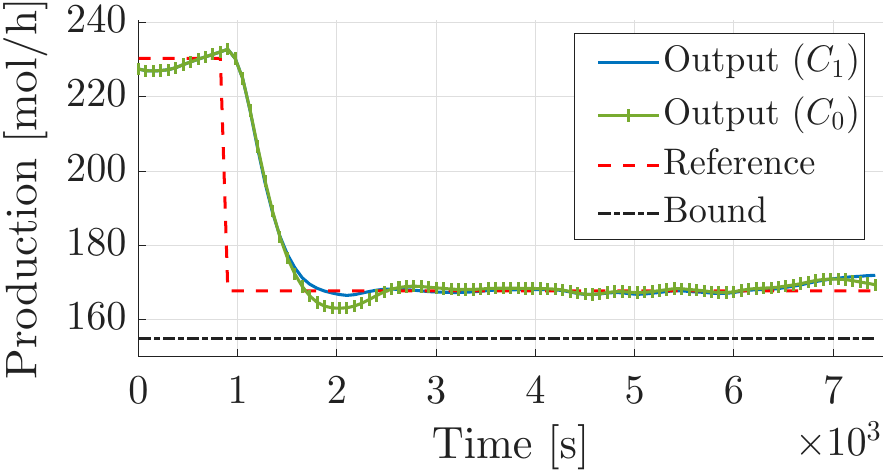}
        \caption{Output $P_B$.}
        \label{subfig:exp_2:output_2}
    \end{subfigure}%
    \hfill%%
    \begin{subfigure}{0.33\textwidth}
        \includegraphics[width=\linewidth, height = 2.8cm]{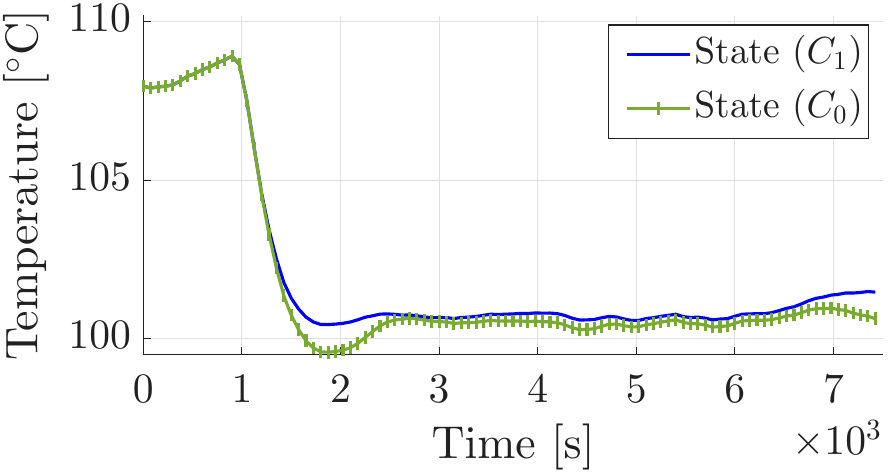}
        \caption{State $\theta$.}
        \label{subfig:exp_2:state_3}
    \end{subfigure}%
    \hfill%%
    \vskip\baselineskip
    \centering
    \begin{subfigure}{0.33\textwidth}
        \includegraphics[width=\linewidth, height = 2.8cm]{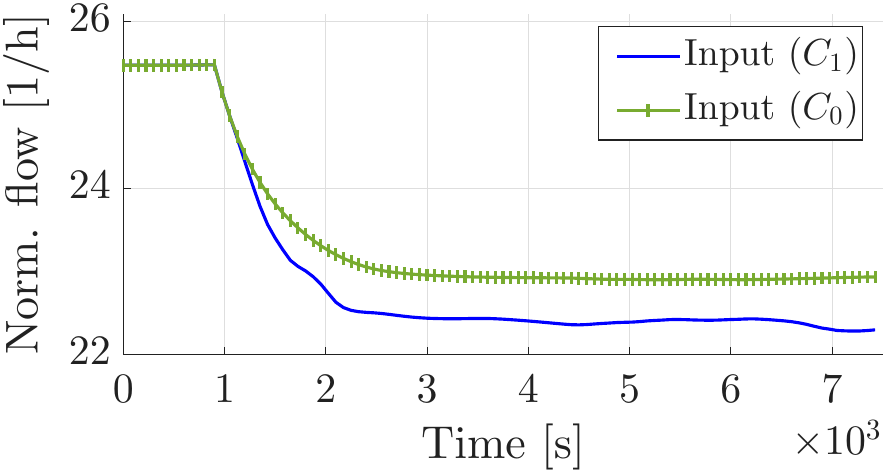}
        \caption{Input $F_N$.}
        \label{subfig:exp_22:input_1}
    \end{subfigure}%
    \hfill%%
    \begin{subfigure}{0.33\textwidth}
        \includegraphics[width=\linewidth, height = 2.8cm]{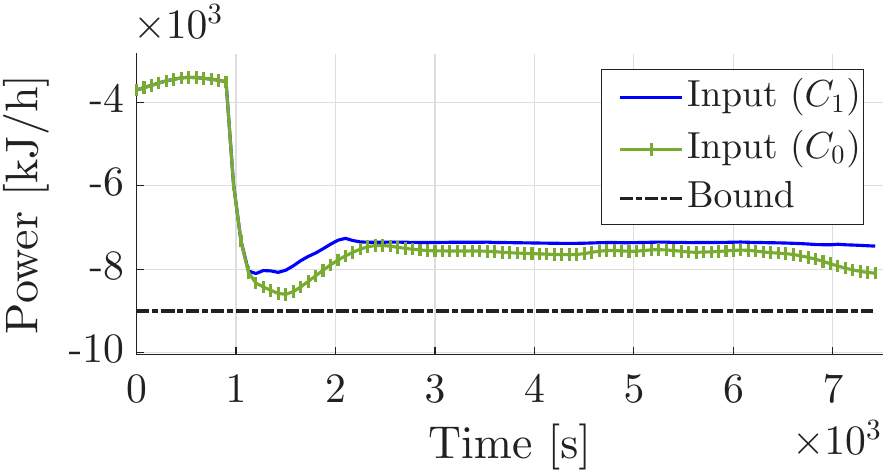}
        \caption{Input $P_K$.}
        \label{subfig:exp_2:input_2}
    \end{subfigure}%
    \hfill%%
    \begin{subfigure}{0.33\textwidth}
        \includegraphics[width=\linewidth, height = 2.8cm]{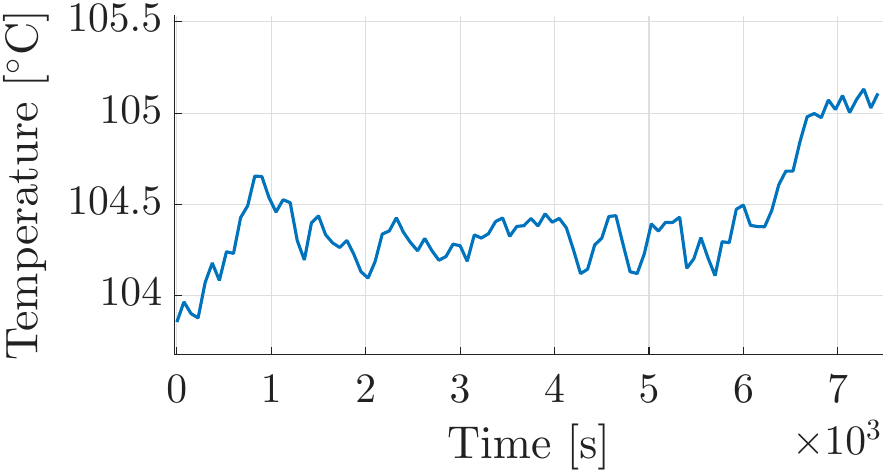}
        \caption{Disturbance $\theta_d$.}
        \label{subfig:exp_2:disturbance}
    \end{subfigure}%
    \caption{Comparison of closed-loop simulations of the CSTR system using the controllers~$C_1$ and $C_0$. A case with nonadmissible reference pairs $(x_r(k),u_r(k))$ due to back-off. }
    \label{fig:experiment_2}
\end{figure*}

\begin{figure*}[t]
    \centering
    \begin{subfigure}{0.25\textwidth}
        \includegraphics[width=\linewidth,height=2.5cm]{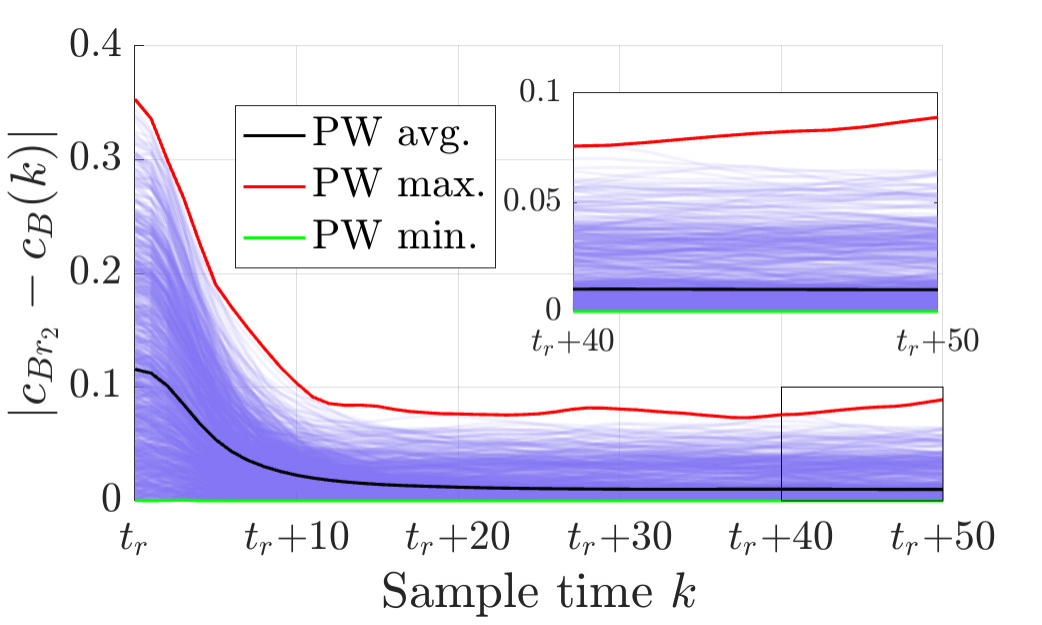}
        \caption{Distance to $c_{Br_{2}}$ using~$C_1$.}
        \label{subfig:dist_to_yr1_back-off}
    \end{subfigure}%
    \hfill%%
    \centering
    \begin{subfigure}{0.25\textwidth}
        \includegraphics[width=\linewidth,height=2.5cm]{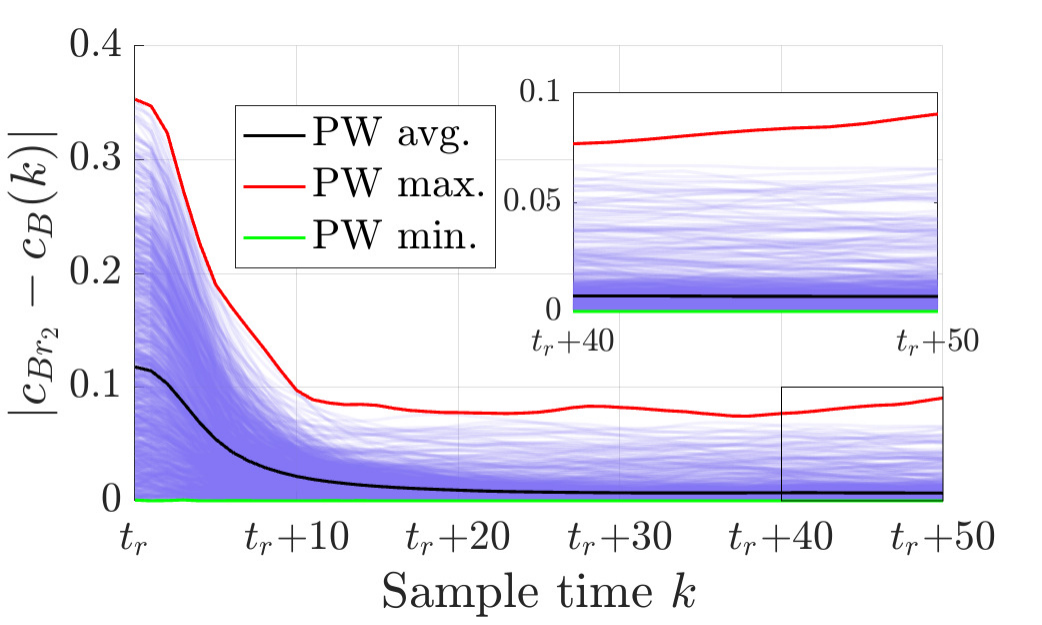}
        \caption{Distance to $c_{Br_{2}}$ using~$C_0$.}
        \label{subfig:dist_to_yr1_no_back-off}
    \end{subfigure}%
    \hfill%%
    \centering
    \begin{subfigure}{0.25\textwidth}
        \includegraphics[width=\linewidth,height=2.5cm]{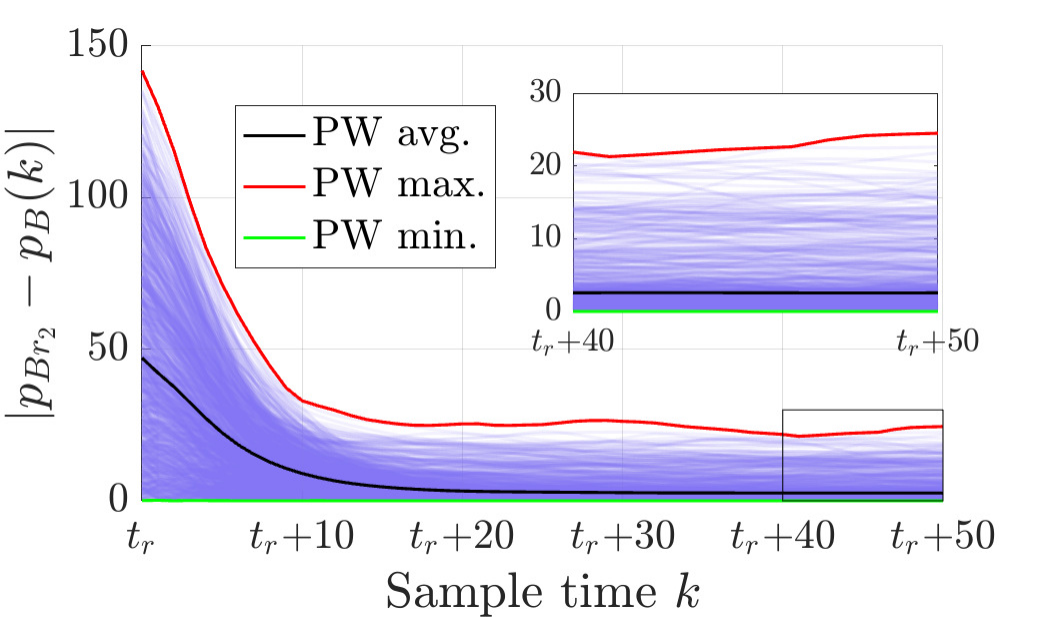}
        \caption{Distance to $p_{Br_{2}}$ using~$C_1$.}
        \label{subfig:dist_to_yr2_back-off}
    \end{subfigure}%
    \hfill%%
    \centering
    \begin{subfigure}{0.25\textwidth}
        \includegraphics[width=\linewidth,height=2.5cm]{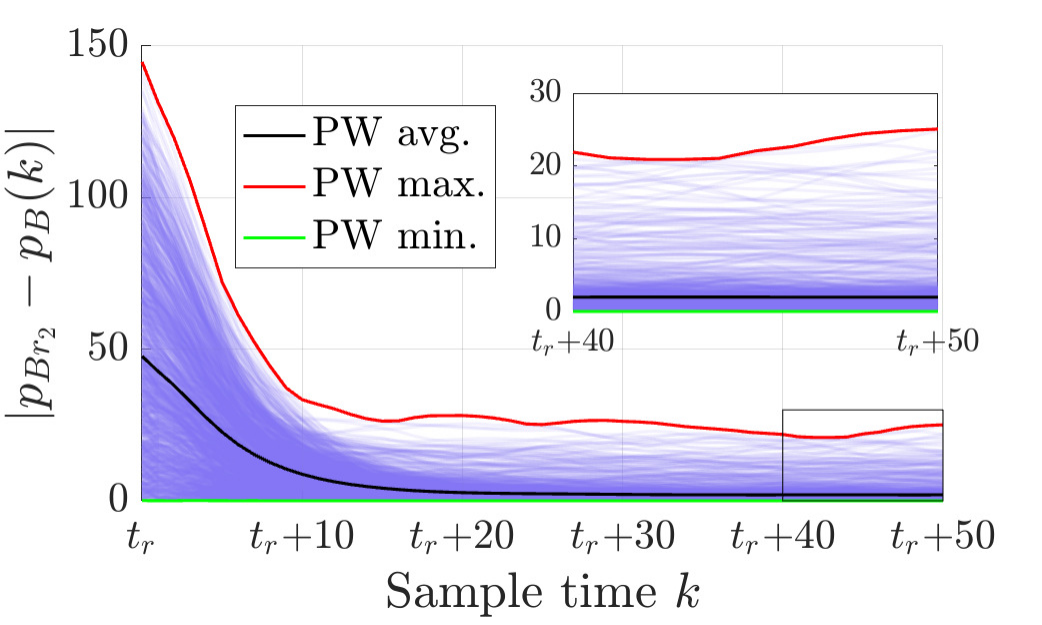}
        \caption{Distance to $p_{Br_{2}}$ using~$C_0$.}
        \label{subfig:dist_to_yr2_no_back-off}
    \end{subfigure}%
    \caption{Comparison of distances from outputs to reference $y_r$ using controllers $C_1$ and $C_0$. In the legend, ``PW'' stands for ``point-wise''.}
    \label{fig:dist_to_yr}
\end{figure*}

\begin{figure*}[h!]
    \centering
    \begin{subfigure}{0.25\textwidth}
        \includegraphics[width=\linewidth,height=2.5cm]{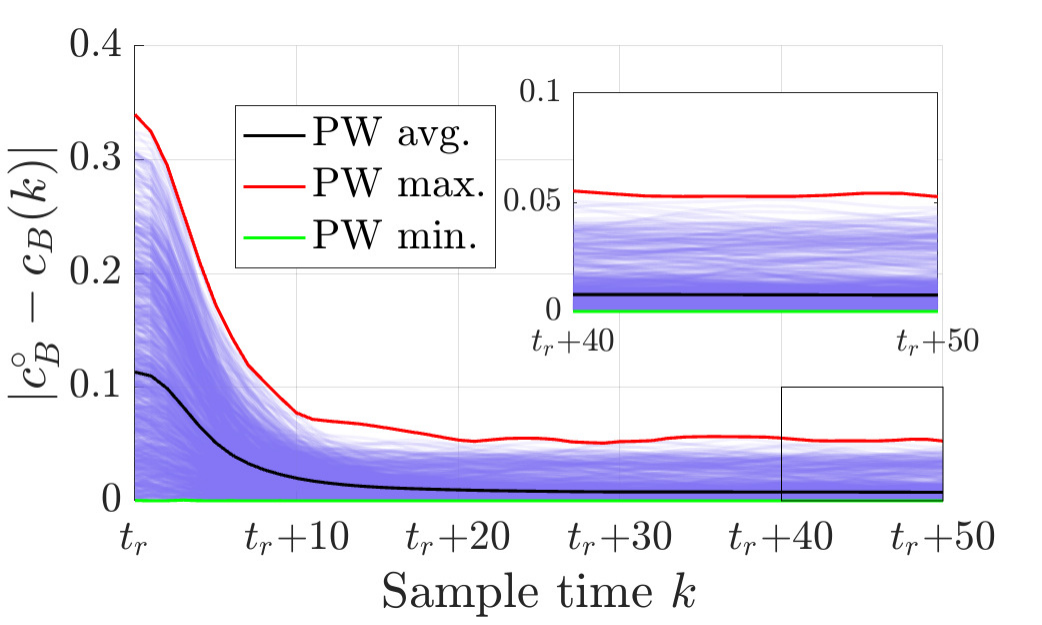}
        \caption{Distance to $c_B^\circ$ using~$C_1$.}
        \label{subfig:dist_to_yo1_back-off}
    \end{subfigure}%
    \hfill%%
    \centering
    \begin{subfigure}{0.25\textwidth}
        \includegraphics[width=\linewidth,height=2.5cm]{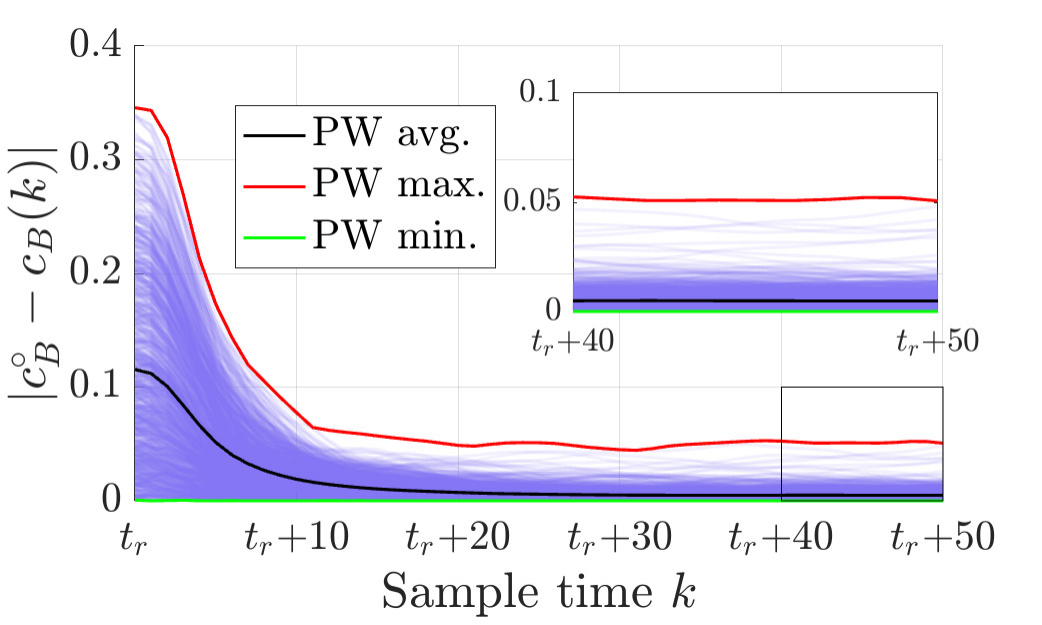}
        \caption{Distance to $c_B^\circ$ using~$C_0$.}
        \label{subfig:dist_to_yo1_no_back-off}
    \end{subfigure}%
    \hfill%%
    \centering
    \begin{subfigure}{0.25\textwidth}
        \includegraphics[width=\linewidth,height=2.5cm]{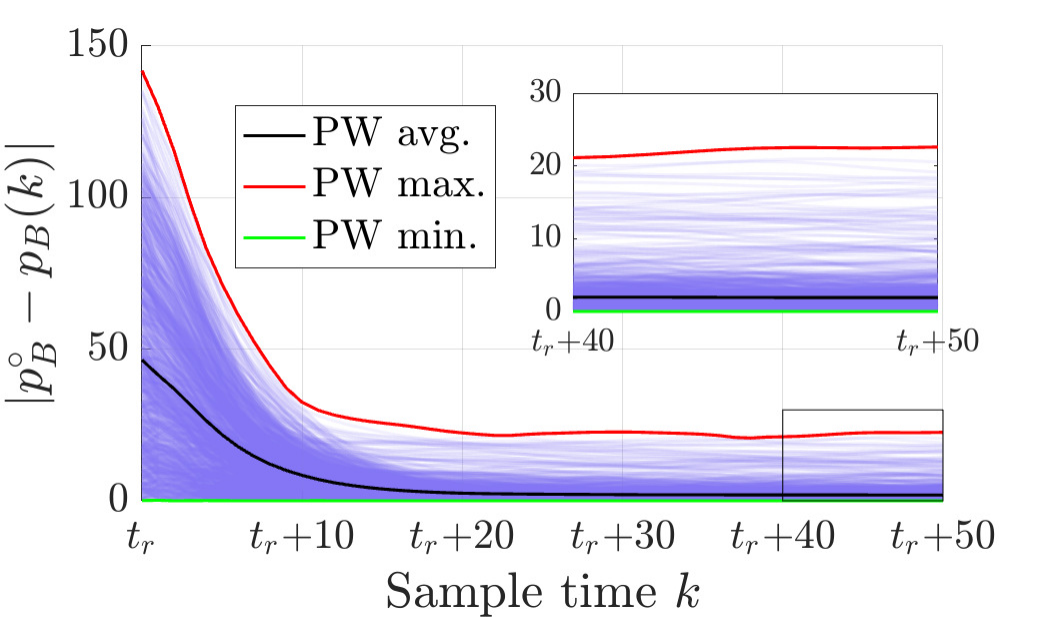}
        \caption{Distance to $p_B^\circ$ using~$C_1$.}
        \label{subfig:dist_to_yo2_back-off}
    \end{subfigure}%
    \hfill%%
    \centering
    \begin{subfigure}{0.25\textwidth}
        \includegraphics[width=\linewidth,height=2.5cm]{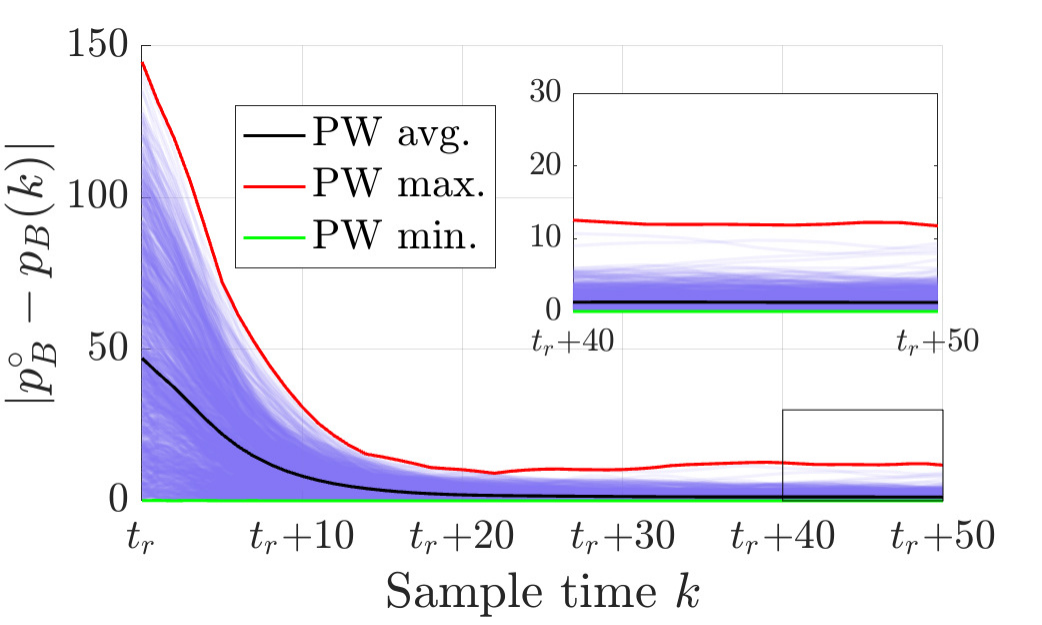}
        \caption{Distance to $p_B^\circ$ using~$C_0$.}
        \label{subfig:dist_to_yo2_no_back-off}
    \end{subfigure}%
    \\
    \caption{Comparison of distances from the outputs to the admissible reference $y^\circ$ using controllers $C_1$ and $C_0$.}
    \label{fig:dist_to_yo_back-off}
\end{figure*}

\begin{figure*}[t]
    \centering
    \begin{subfigure}{0.25\textwidth}
        \includegraphics[width=\linewidth,height=2.49cm]{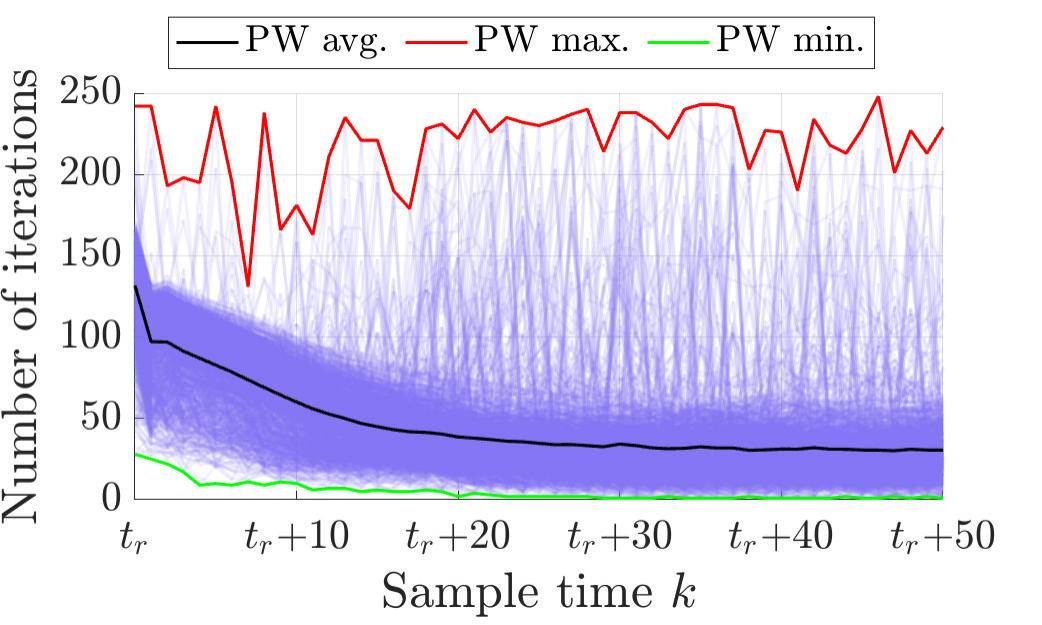}
        \caption{Number of iterations using~$C_1$.}
        \label{subfig:iter_back-off}
    \end{subfigure}%
    \hfill%%
    \hfill%%
    \centering
    \begin{subfigure}{0.25\textwidth}
        \includegraphics[width=\linewidth,height=2.49cm]{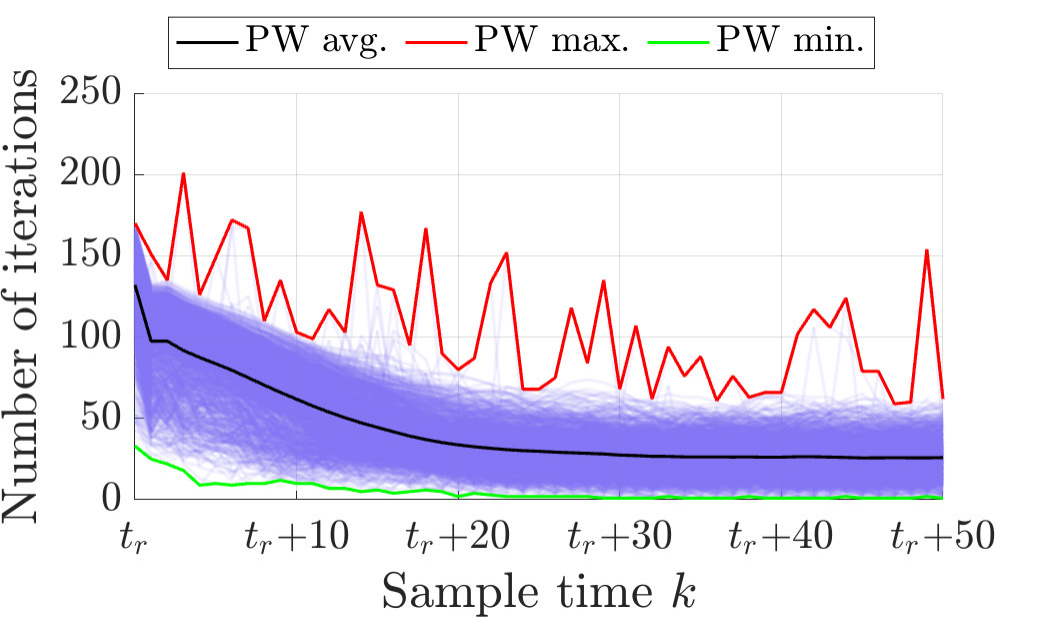}
        \caption{Number of iterations using~$C_0$.}
        \label{subfig:iter_no_back-off}
    \end{subfigure}%
    \hfill%%
    \centering
    \begin{subfigure}{0.25\textwidth}
        \includegraphics[width=\linewidth,height=2.49cm]{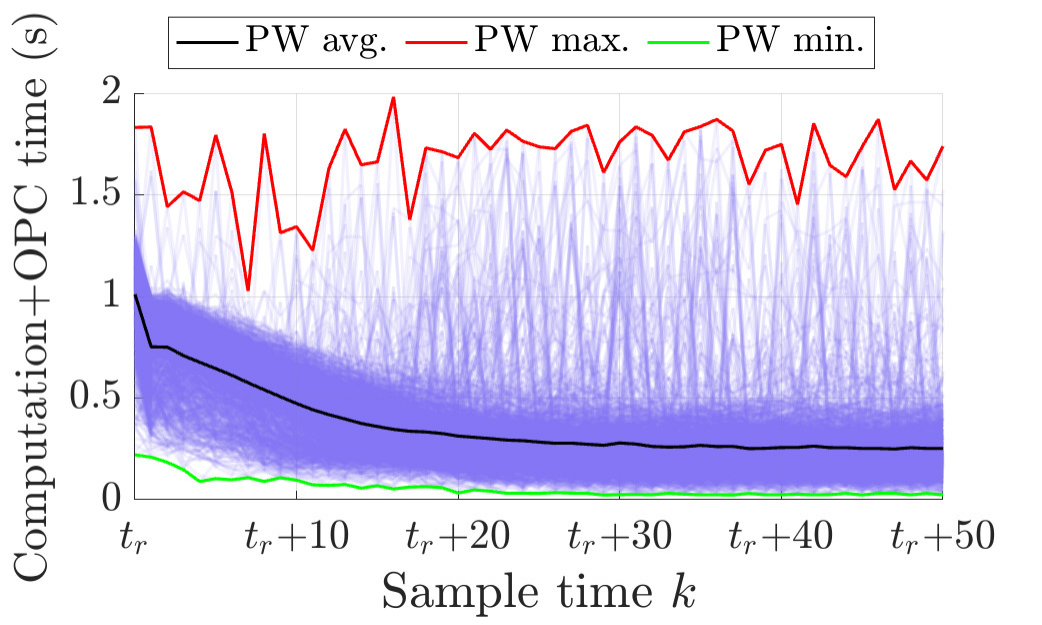}
        \caption{Input availability time using~$C_1$.}
        \label{subfig:time_back-off}
    \end{subfigure}%
    \centering
    \begin{subfigure}{0.25\textwidth}
        \includegraphics[width=\linewidth,height=2.49cm]{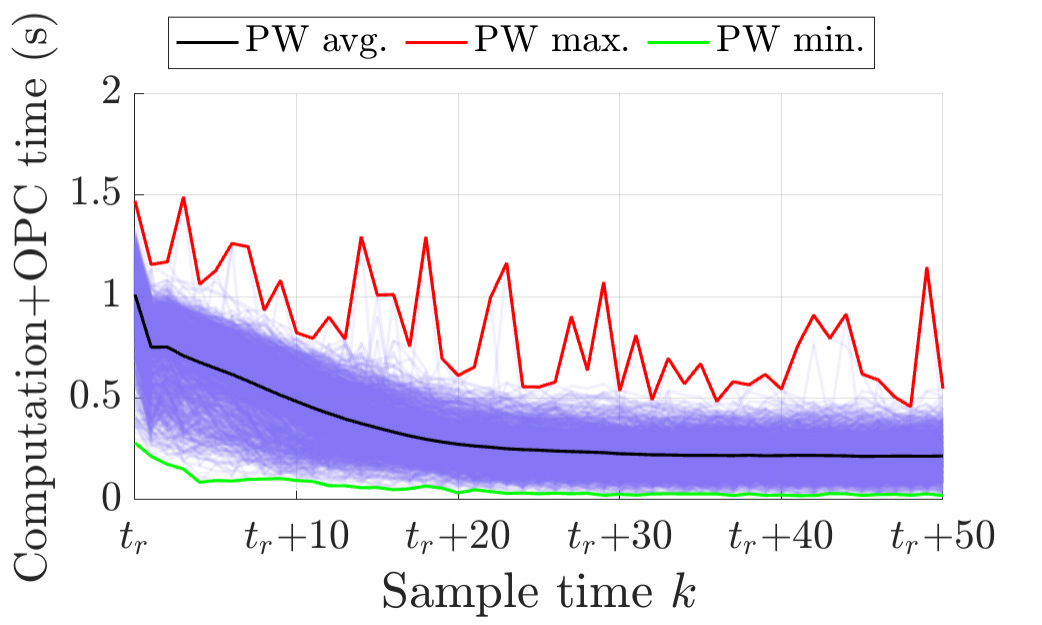}
        \caption{Input availability time using~$C_0$.}
        \label{subfig:time_no_back-off}
    \end{subfigure}%
    \\
    \caption{Comparison of number of iterations and time required to obtain the input using controllers $C_1$ and $C_0$.}
    \label{fig:time_iter}
\end{figure*}

To empirically validate the probabilistic bounds obtained, each controller was tested in $1000$ new \text{i.i.d.} closed-loop experiments. The right part of Table~\ref{tab:validation_results} shows the results, which confirm the probabilistic bounds and indicate that the number of experiments in which the controllers exceed their thresholds is rather small. It was also observed that, in most instances, the probabilistic bounds for both~$\phi^1$ and $\phi^2$ were simultaneously satisfied.

To illustrate that constraint tightening can exacerbate the issue of generating infeasible references $(x_r(k),u_r(k))$ in~\eqref{eq:ref_sys}, thus preventing the system from achieving offset-free tracking to constant output references~$y_r$, Figure~\ref{fig:experiment_2} compares the controllers~$C_1$ and~$C_0$ in one of the $N_s$ PLC validation experiments. The results show that, in this particular case, offset-free tracking of the output~$c_B$ is not achieved with the selected controller~$C_1$, whereas the controller~$C_0$ does achieve offset-free tracking (albeit at the expense of constraint violations).

To assess reference tracking performance of~$C_1$ over the~$N_s$ experiments, the distance between each output and the reference~$y_{r2}$ is recorded from the sample time~$t_r$, in which the reference changes, until $t_r+50$.
In addition, the process is repeated for~$C_0$ to illustrate how back-off affects the ability of the controller to reach the output reference.
The results are shown in Figure~\ref{fig:dist_to_yr}. Comparing Figure~\ref{subfig:dist_to_yr1_back-off} with~\ref{subfig:dist_to_yr1_no_back-off}, and~\ref{subfig:dist_to_yr2_back-off}  with~\ref{subfig:dist_to_yr2_no_back-off}, it is observed that tracking performance improves when using $C_0$ compared to $C_1$ (see the zoomed-in parts), as with the former, offset is reduced in general and a notably larger number of experiments achieve offset-free tracking, of course at the expense of higher constraint violations. However, in many experiments, the reference could not be reached with~$C_0$, partly because the considered output references~$y_{r2}$ were not admissible due to the real system constraints.
Indeed, this situation is not unusual in practice, where the operator may not know if the selected reference is admissible, but still aims to stabilize the system as close as possible to it.

To better assess the effect of back-off and the actual tracking performance of the controllers, the distance to the admissible steady output $y^\circ\doteq\(C_{B}^\circ, p_B^\circ\)$ is computed. This output is obtained in two steps: \emph{(i)} given~$y_{r2}$, compute the steady state-input pair $(x_{u},u_{u})$ to which the closed-loop system (with $\theta_d=\theta_0$) would converge in the absence of constraints; and \emph{(ii)} considering the system constraints~\eqref{eq:non_lin_sys_constraints}, determine the admissible steady output~$y^\circ$ corresponding to the steady state-input $(x_c,u_c)$ obtained by minimizing the distance between $u_c$ and $u_{u}$. The results are illustrated in Figure~\ref{fig:dist_to_yo_back-off}, where it can be noticed that, although back-off helps enforce the constraints, in many cases it prevents the system from approaching $y^\circ$, in contrast to most cases where no back-off is used.

Figure~\ref{fig:time_iter} shows a comparison between the controllers~$C_1$ and $C_0$ in terms of number of iterations and computation-plus-communication time obtained for the $N_s$ experiments between the sample times~$t_r$ and $t_r+50$. Note that the controller~$C_0$ requires in general a lower number of iterations due to a reduced occurrence of active constraints. Nonetheless, the controller $C_1$~provides rather acceptable results while limiting constraint violations to small values, yielding a maximum time on the PLC of $1.98$ seconds (considerably shorter than the sample time of $75$ seconds).
In terms of memory efficiency, the implementation of the estimator~\eqref{eq:observer}, the solver for~\eqref{eq:MPCT_soft} and the reference calculator~\eqref{eq:ref_sys} required $0.73$ MB of memory in the PLC, which constitutes only $3\%$ of the $24$ MB available.

\section{Conclusions} \label{sec:conclusions}

This article has demonstrated the viability of controlling a real (nonlinear) system using a refined linear MPC controller implemented on industrial embedded hardware. Combined with an offset-free scheme, it proved to be a practical solution thanks to several key features: an artificial reference that avoids the need to provide an admissible reference; soft constraints that ensure feasibility of the MPC optimization problem; and back-off parameters to adjust the model constraints to satisfy the actual system constraints. In addition, a probabilistic framework that enables the selection of critical controller parameters and the validation of the long-term closed-loop system operation was proposed. By means of a structure-exploiting ADMM-based solver, it was shown how the formulation can be implemented efficiently, according to embedded systems capabilities.
For the numerical experiments, the formulation was implemented on a Siemens S7-1500 PLC to control a nonlinear CSTR system in a HiL setup. The results confirmed the effectiveness of both the formulation and the probabilistic framework, obtaining a controller with probabilistic bounds on critical aspects such as constraint violations and computational efficiency, thereby meeting the requirements of real-world operation. Overall, this work contributes to bridging the gap between academic research and industrial practice.

\section*{Declaration of competing interest}

The authors declare that they have no known competing financial interests or personal relationships that could have appeared to influence the work reported in this paper.

\section*{Acknowledgments}

\noindent This work was supported from grants PID2022-142946NA-I00 and PID2022-141159OB-I00, funded by MICIU/AEI/ 10.13039/501100011033 and by ERDF/EU.

\noindent V.~Gracia acknowledges support from grant PREP2022-000136 funded by MICIU/AEI/ 10.13039/501100011033 and by ESF+.

\sloppy
\noindent F.~Fele also gratefully acknowledges support from~grant RYC2021-033960-I funded by MICIU/AEI/ 10.13039/ 501100011033 and European Union NextGenerationEU/PRTR.

% Fakesection Bibliography

\setlength{\bibsep}{0.0pt}

\bibliographystyle{elsarticle-harv}
\bibliography{bib_reviewed}

%\appendix
%\section*{Appendix}

\end{document}